\begin{document}
\begin{frontmatter}

\title{Passwords: Divided they Stand, United they Fall}
\author{First Author \and Second Author}
\author{Harshal Tupsamudre \and Sachin Lodha}
%
%\authorrunning{H. Tupsamudre et al.}
% First names are abbreviated in the running head.
% If there are more than two authors, 'et al.' is used.
%
\institute{TCS Research, Pune, India\\
\email{harshal.tupsamudre@tcs.com, sachin.lodha@tcs.com}
}
\maketitle
\begin{abstract}
Today, offline attacks are one of the most severe threats to password security. These attacks have claimed millions of passwords from prominent websites including Yahoo, LinkedIn, Twitter, Sony, Adobe and many more. Therefore, as a preventive measure, it is necessary to gauge the offline guessing resistance of a password database and to help users choose secure passwords. The rule-based mechanisms that rely on minimum password length and different character classes are too naive to capture the intricate human behavior whereas those based on probabilistic models require the knowledge of an entire password distribution which is not always easy to learn.
\\

In this paper, we propose a space partition attack model which uses information from previous leaks, surveys, attacks and other sources to divide the password search space into non-overlapping partitions and learn partition densities. We prove that the expected success of a partition attacker is maximum if the resulting partitions are explored in decreasing order of density. We show that the proposed attack model is more general and various popular attack techniques including probabilistic attacker, dictionary-based attacker, grammar-based attacker and brute-force attacker are just different instances of a partition attacker. Later, we introduce bin attacker, another instance of a partition attacker, and measure the guessing resistance of real-world password databases. We demonstrate that the utilized search space is very small and as a result even a weak attacker can cause sufficient damage to the system. 
\\

We prove that partition attacks can be countered only if the partition densities are uniform. We use this result and propose a system that thwarts partition attacker by distributing users across different partitions. Finally, we demonstrate how some of the well-known password schemes can be adapted to help users in choosing passwords from the system assigned partitions and investigate their effectiveness by performing a usability study.
\end{abstract}

\keywords{text-based password, password security, offline attack model, countermeasure, search space partitions}

\end{frontmatter}
%\linenumbers

\section{Introduction}
Text-based passwords were introduced in the digital world in the mid-1960s to provide controlled access to time-sharing computers~\cite{history}. Today, with the proliferation of internet services such as banking, insurance, utility, media and many others, the role of text-based passwords has been renewed to protect access to the diverse range of resources residing on the remote web server. However, the use of human-generated text passwords as an authentication mechanism raises some serious concerns which need to be addressed. Text-based passwords are drawn from a relatively small space which makes them vulnerable to guessing attacks~\cite{Morris,Klein,Weir}. This vulnerability is due to the user's preference for common dictionary words and the predominant use of lowercase letters and digits. To achieve the desired level of security, various alternatives such as graphical passwords and biometrics have been proposed. However, the benefits provided by text passwords are unmatched and cannot be achieved by any of the existing alternatives~\cite{quest}. Consequently, text-based passwords still remain the dominant form of user authentication.

Depending upon whether the guessing is carried out remotely or locally, guessing attacks can be categorized into two types, online attack and offline attack. In an online attack, the attacker attempts to login into the system by guessing the password of a legitimate user. To improve the likelihood of a successful login, the online attacker sorts the password guesses in decreasing order of probability (learned from previous breaches) and tests popular passwords first. These attacks work well on websites with a large user base. For instance, the breach of around $32 \ million$ accounts from the Rockyou website revealed that the password ``123456'' was used by 290,731 ($\sim$1\%) of its users~\cite{Rockyou1}. Further, recently published password frequency list of $70  \ million$ Yahoo users~\cite{Yahoo} reveals the count of the most popular password to be 753,217 ($\sim$1\%). This real-world data suggests that the online attacker can compromise nearly 1\% of the  accounts by merely trying a single guess on the target website. The conventional lock-out strategy that limits the number of unsuccessful attempts does not counter such attacks which target many accounts with a handful of popular passwords. 

To mitigate the threat of online attacks, Schechter {\em et al.}~\cite{Schechter} proposed the use of a count-min sketch data-structure to count the occurrences of every password on a given system and to prohibit those passwords that reach a certain popularity threshold. Later, South {\em et al.}~\cite{South} proposed counting the occurrences of passwords as well as their variations that lie within hamming distance of two, especially targeting the leet transformations that replace letters `a' with `@', `i' with `!' and so on. Websites such as Twitter took a more direct approach by banning 370 popular passwords. Thus, blacklisting the use of popular passwords and their common variants can provide much better resistance to online attacks. 

The offline guessing attacks, on the other hand, pose a very serious threat to password security. In this scenario, the attacker has complete access to the database comprising passwords of all registered users of a website\footnote{We assume that these passwords are protected using a one-way hash function, otherwise no guessing is required.}. The attacker in possession of the password database can generate and verify potentially unlimited guesses against hashed passwords. And, if passwords are simple, such as dictionary words or small enough for brute-force search~\cite{Morris}, then the attacker can recover them in no time. 

Nowadays, the breach of a password database is a common event. In the past few years, millions of passwords have been stolen from prominent websites including Yahoo, LinkedIn, Hotmail, Twitter, Sony, Adobe and many others~\cite{pwned}. The continuous improvement in the state-of-the-art password crackers has resulted in a conglomeration of different attack techniques\footnote{While demonstrating the effectiveness of these attacks, researchers assumed that the target password database is protected using a one-way hash function with no salts.} ranging from simple dictionary attacks~\cite{Klein} to more sophisticated attacks based upon the Markov model~\cite{Narayanan} and probabilistic context-free grammar~\cite{Weir}. Moreover, with the advent of GPU computing, there has been a tremendous rise in the guessing capability of the offline attacker. For instance, in 2012, a 25-GPU cluster~\cite{GPU} was unveiled which could generate 350 billion NTLM hashes per second and therefore guess a standard eight character length Windows password ($95^{8}$ search space) in less than six hours. Further, the password guessing tools have already been optimized to work on GPU~\cite{hashcat}. {\em Thus, the frequent occurrences of high magnitude breaches along with the availability of sophisticated cracking techniques combined with the high speed GPU clusters has made offline password guessing a real threat, which needs to be addressed.}

Various attempts have been made in the past to measure the strength of a password database against offline attacks. The popular technique to infer the resistance of a password database is by simulating attacks on publicly available breached databases~\cite{Weir,Shay3,Chinese} or on passwords collected from surveys~\cite{Shay1,Shay2,Shay2014}. NIST provides an entropy measure for gauging password strength~\cite{NIST}, however, this approach is not based upon large empirical data and depends only upon the password length and character classes (such as lowercase, uppercase, digit and symbol) used for composing passwords. In 2012, Bonneau~\cite{BonneauMetric} proposed a partial guessing metric to measure the resistance of a password database against offline attacks. The attack model used in deriving this metric assumes the knowledge of a password distribution which is not easy to learn. This attack model accurately represents the all-powerful attacker who has complete information to enumerate all passwords in decreasing order of probability. However, there are circumstances where the attacker cannot be assumed to have this complete information. 

The offline attacker mainly exploits the fact that few passwords and their variations are very popular (these popular passwords are mostly learned from the breached databases such as Rockyou). However, a major fraction of the password database consists of infrequent passwords~\cite{Zipf} and their probabilities are not easy to learn. For instance, the most popular Rockyou password occurred 290,731 times, however, a long list (53\%) of Rockyou passwords occurred less than five times. Further, if a website implements new password policy that blacklists previously breached passwords and prohibits passwords exceeding certain frequency threshold (as proposed by Schechter {\em et al.}~\cite{Schechter}) then the attacker cannot estimate the probabilities of passwords in the target database until such type of passwords gets leaked.

In this paper, we propose a more general attack model which utilizes information from previous leaks, surveys, attacks and other sources to divide the entire password search space into non-overlapping partitions. Although the attacker cannot learn the password probabilities, we show that the attacker can still learn the partition level features and recover many passwords. The size of the resulting partitions depends upon the information possessed by the attacker. The more information regarding the target password the more granular are the partitions. 
We provide the intuition behind this partitioning attack model using an example. Suppose that the offline attacker acquires an unprotected password database comprising $\phi = 26$ passwords drawn from a small hypothetical search space $\sigma$. Further, suppose that after analysing these 26 plaintext passwords, the attacker divides the entire password search space into 16 non-overlapping chunks as shown in Figure~\ref{fig:partition}.
\begin{figure}[h]
\centering
\includegraphics[scale=0.6]{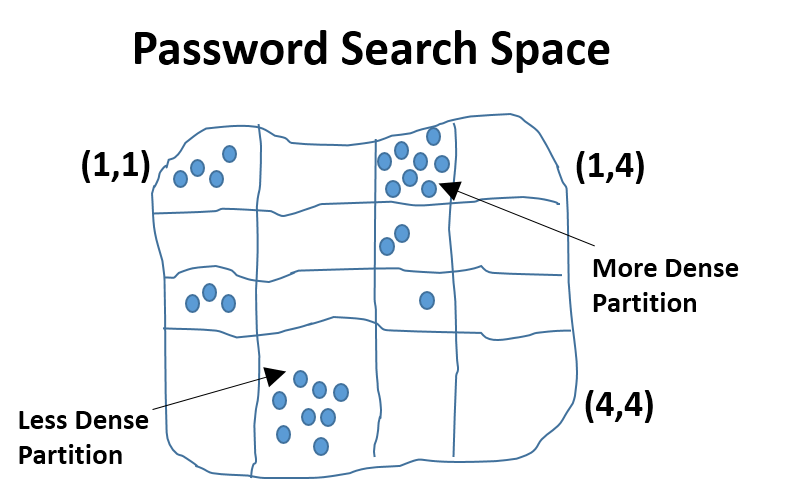}
\caption{Division of a hypothetical search space $\sigma$ into 16 partitions.}~\label{fig:partition}
\end{figure}
\begin{definition}
The division of a password search space $\sigma$ into $n$ non-overlapping chunks $\{\sigma_{1},\sigma_{2},\ldots,\sigma_{n}\}$ is known as partitioning and the chunks themselves are called as partitions.
\end{definition} 
As shown in Figure~\ref{fig:partition} all 26 passwords are distributed in just 6 partitions and the remaining 10 partitions are empty. Although the search space depicted in the figure is for illustration purpose, we shall see later in section~\ref{sec:binattack} that the situation of the real world passwords is very much similar. The fact that most partitions of the search space are never used clearly shows the gap between the available search space and the utilized search space. After learning the partitions using $\phi=26$ passwords, the attacker can now break the protected passwords in the target database by simply ignoring the 10 unused partitions and utilizing the available computing resources to explore these 6 utilized partitions. Thus, the search space for the attacker is reduced considerably.  However, we show that the attacker can do better than this.

For convenience, we view the password search space shown in Figure~\ref{fig:partition} as a 2-D array of 4X4 partitions and refer to its 16 partitions by a tuple $(x,y)$, where $x$ indicates the row and $y$ indicates the column number of the partition.  Now, consider the partitions $(1,1)$ and $(1,3)$ that have 4 and 8 passwords respectively. Thus, for creating passwords, the partition $(1,3)$ seems to be more popular than the partition $(1,1)$. We capture the notion of popularity by defining the probability of a partition. Let $\phi_{i}$ represent the number of passwords in a database that belong to $i^{th}$ partition. Therefore we have,
\begin{equation}
\phi = \sum\limits_{i=1}^{n}\phi_i
\end{equation}
\begin{definition}
The probability $p_{\sigma_i}$ of a partition $\sigma_{i}$ is the fraction of passwords in a password database that belong to the partition $\sigma_{i}$.
\begin{equation}
p_{\sigma_i} =  \phi_i/\phi
\end{equation}
\end{definition}
The probability of the partition $(1,1)$ is 4/26 while the probability of the partition $(1,3)$ is $8/26$. Thus, the partition $(1,3)$ is more popular than the partition $(1,1)$. Earlier, we saw that the attacker can benefit considerably by ignoring the partitions that are never used. But now, instead of exploring the utilized partitions in any random order, it seems that the attacker can improve guessing efficiency by exploring the partitions in decreasing order of probability. However, we observe a serious flaw in exploring the partitions in probability order. To see this, let $|\sigma_i|$ denote the size of the $i^{th}$ partition. Thus,
\begin{equation}
|\sigma| = \sum\limits_{i=1}^{n}|\sigma_i|
\end{equation}
In Figure~\ref{fig:partition}, the size of the partitions in the row number 4 is greater than the size of the partitions in every other row. Thus, if the utilized partitions are explored in decreasing order of probability, the attacker can first explore either the partition $(1,3)$ or the partition $(4,2)$ as both contain the same number of passwords. However, the attack is more efficient if the partition $(1,3)$ is explored first, since its size is smaller than the size of the partition $(4,2)$ and hence requires less computation. We capture this notion by defining the density of a partition.
\begin{definition}
The density $d_{\sigma_i}$  of a partition $\sigma_{i}$ is defined as the ratio of the number of passwords in a password database that belong to the partition $\sigma_{i}$ and the size of the partition $\sigma_{i}$.
\begin{equation}
d_{\sigma_i} =  \phi_i/|\sigma_i|
\end{equation}
\end{definition}
Since $\phi_{1,3} = \phi_{4,2}$ and $|\sigma_{1,3}| < |\sigma_{4,2}|$, the density of the partition $(1,3)$ is greater than the density of the partition $(4,2)$ $i.e$ $8/|\sigma_{1,3}| > 8/|\sigma_{4,2}|$ and therefore, the partition $(1,3)$ should be explored first. 

Thus, in the proposed attack model, the attacker divides the password search space into partitions and learns partition level features. The size of the resulting partitions depends upon the information available to the attacker. The attacker can now spend its resources more wisely by targeting just the utilized partitions. In this paper, we provide an optimal search strategy for a partition attacker. We demonstrate the effectiveness of a partition attacker and also show how these attacks can be countered. Specifically, our contributions are as follows.

\subsection{Contribution} 
\begin{enumerate}
\item We propose an offline attack model in which the attacker uses information available from previous breaches, attacks and surveys to learn partitions and to estimate partition level features such as partition densities and probabilities. The granularity of the resulting partitions depends upon the information available to the attacker. More information regarding the password database results in more refined partitions. We refer to this attack model as {\em space partition attack model} and the attacker as {\em partition attacker}. We prove that the success rate of a partition attacker is maximum if the resulting partitions are explored greedily in decreasing order of density. 

\item We show that the proposed attack model is more general as the existing well-known attack techniques including brute-force attacks, dictionary attacks~\cite{Klein}, grammar-based attacks~\cite{Weir}, probabilistic attacks~\cite{BonneauMetric} and attacks on random password generators~\cite{Ganesan} can be explained using our partition attacker framework. 

\item We demonstrate the effectiveness of the partition attack model with another instance of a partition attacker which we refer to as {\em bin attacker}. We measure the resistance of the publicly available breached databases and show that with the current computing power bin attacker can break nearly 90\% passwords in all password databases. Further, we hypothesize that the presence of composition policies still leads to non-uniform bin densities and hence can be exploited by bin attacker.

\item We prove that partition attackers can be countered only if the partition densities are uniform. 
We use this result and propose a {\em bin explorer system} to resists offline attacks by distributing users across different bins uniformly. 

\item We propose various bin assignment strategies for achieving uniform bin densities. We also derive password policy parameters such as password length based upon the computing capability of the partition attacker and the number of users in the system.

\item Finally, we adapt some of the well-known schemes to help users in choosing passwords from the system assigned bins and investigate their usability.
\end{enumerate}

\section{Space Partition Attack Model}
In the space partition attack model, the partition attacker uses information from previous leaks, surveys and attacks to divide the search space into partitions and to learn the partition level features such as partition densities and partition probabilities. The sizes of the resulting partitions need not be identical and can vary anywhere between 1 and the size of the search space $\sigma$. The granularity of partitions depends upon the information possessed by the attacker. The more information the attacker possesses, the more granular the partitions are. Now, these resulting partitions can be explored in various ways, the attacker can simply ignore the empty partitions and explore the non-empty partitions either in random order or in decreasing order of partition probabilities. However, we prove that the expected success of a partition attacker is maximum if the resulting partitions are explored in decreasing order of density.

Suppose that the attacker possesses the repository of passwords collected from various sources. Assume that these passwords are drawn from the password search space $\sigma$ and the attacker divides the search space $\sigma$ into $n$ partitions $\{\sigma_{1},\sigma_{2},...,\sigma_{n}\}$. Thus,
\begin{equation}
\sigma = \cup_{i=1}^{n} \sigma_{i}
\end{equation}  
Further, suppose this is the best refinement of the search space $\sigma$ which the attacker can achieve. We emphasize that the attacker has no further information regarding passwords inside the partitions. If the attacker had more information then these partitions could be refined further, thus contradicting the best refinement assumption. Therefore, the attacker assumes every password within a given partition to be equally likely.

 Let $\phi \gg n$ denote the total number of passwords in the password repository of attacker and $\phi_{i}$ denote the number of passwords that belong to the partition $\sigma_{i}$. Since the same password can be used by multiple users, $\phi_i$ represents the cardinality of a multiset, where each element of this multiset is a password drawn from the partition $\sigma_{i}$. 
\begin{equation}
\phi =  \sum\limits_{i=1}^n \phi_{i}
\end{equation}
Thus, the partition densities are $d_{\sigma} = \{\frac{\phi_{1}}{|\sigma_{1}|},\frac{\phi_{2}}{|\sigma_{2}|},\ldots,\frac{\phi_{n}}{|\sigma_{n}|}\}$ and the partition probabilities are $p_{\sigma} = \{\frac{\phi_{1}}{\phi},\frac{\phi_{2}}{\phi},\ldots,\frac{\phi_{n}}{\phi}\}$. 

Now, let $\alpha < |\sigma|$ represent the number of guesses that the attacker can compute to mount the offline attack. Since $\alpha < |\sigma|$ the attacker cannot mount the brute-force attack on the search space $\sigma$. The objective of the attacker is to maximize the success rate of the attack using the limited computational resource $\alpha$. Assume that the attacker distributes this resource $\alpha$ across $n$ partitions $\{\sigma_{1},\sigma_{2},...,\sigma_{n}\}$ as $\{\alpha_{1},\alpha_{2},...,\alpha_{n}\}$, {\em i.e.} the attacker spends an effort of $\alpha_i \le |\sigma_i|$ on the partition  $\sigma_i$. Thus,
\begin{equation}
\alpha = \sum\limits_{i=1}^{n} \alpha_i
\end{equation}
Since the attacker has only the partition level view and assumes every password within a given partition to be equally likely, the attacker can expect to recover $ \alpha_{i} \cdot \frac{\phi_{i}}{|\sigma_{i}|}$ passwords from the partition $\sigma_i$ after generating $\alpha_i$ guesses. Therefore, the expected success of a partition attacker is,
\begin{equation}\label{eq:success}
E(success) =  \sum\limits_{i=1}^{n} \alpha_{i} \cdot \frac{\phi_{i}}{|\sigma_{i}|}
\end{equation}
\subsection{Partition Attacker Characteristics}
In short, the partition attacker can perform the following tasks:
\begin{enumerate}
\item divide the password search space $\sigma$ into $n$ partitions $\{\sigma_{1},\sigma_{2},...,\sigma_{n}\}$.
\item compute partition level features such as partition densities \\$d_{\sigma} = \{\frac{\phi_{1}}{|\sigma_{1}|},\frac{\phi_{2}}{|\sigma_{2}|},\ldots,\frac{\phi_{n}}{|\sigma_{n}|}\}$ and partition probabilities $p_{\sigma} = \{\frac{\phi_{1}}{\phi},\frac{\phi_{2}}{\phi},\ldots,\frac{\phi_{n}}{\phi}\}$.
\item assume every password within a given partition $\sigma_{i}$  to be equally likely with a probability $\frac{\phi_{i}}{|\sigma_{i}|}$ .
\item decide upon a specific attack strategy and distribute the available computing power $\alpha < |\sigma|$ among $n$ partitions $\{\sigma_{1},\sigma_{2},...,\sigma_{n}\}$ as $\{\alpha_{1},\alpha_{2},...,\alpha_{n}\}$ .
\item explore the partition $\sigma_{i}$ by generating $\alpha_i$ random guesses without repetition.
\end{enumerate}

In the next section, we show how this partition attacker can spend the available resource $\alpha$ pragmatically and achieve the maximum expected success.
\subsection{Maximum Expected Success}
\newtheorem{Theorem}{Theorem}
\begin{Theorem}
Suppose that the partition densities are non-uniform. Further, without loss of generality assume that $\frac{\phi_{1}}{|\sigma_{1}|} \geq \frac{\phi_{2}}{|\sigma_{2}|} \geq \ldots \geq \frac{\phi_{n}}{|\sigma_{n}|}$. If the computational resource $\alpha$ available for an attack is limited, {\em i.e.} $\alpha < |\sigma|$, then the maximum expected success of a partition attacker is,
\begin{equation}
E_{max}(success) =  \sum\limits_{i=1}^{i_{0}} \phi_{i}  + (\alpha -  \sum\limits_{i=1}^{i_{0}} |\sigma_{i}|) \cdot \frac {\phi_{i_0+1}}{|\sigma_{i_0+1}|}
\end{equation}
where $i_{0}$ is such that $ \sum\limits_{i=1}^{i_{0}}{|\sigma_{i}|} \le \alpha < \sum\limits_{i=1}^{i_{0}+1}{|\sigma_{i}|}$.
\end{Theorem}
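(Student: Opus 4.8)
The plan is to recognize this as a linear program: maximize $\sum_{i=1}^n \alpha_i \cdot \frac{\phi_i}{|\sigma_i|}$ subject to $\sum_{i=1}^n \alpha_i = \alpha$ (or $\le \alpha$, since unused budget never hurts) and $0 \le \alpha_i \le |\sigma_i|$ for each $i$. The objective is a nonnegative linear combination of the $\alpha_i$ with coefficients equal to the partition densities. Since we are told $\frac{\phi_1}{|\sigma_1|} \ge \frac{\phi_2}{|\sigma_2|} \ge \cdots \ge \frac{\phi_n}{|\sigma_n|}$, the natural candidate for the optimum is the greedy ``water-filling'' allocation: fill partition $\sigma_1$ completely ($\alpha_1 = |\sigma_1|$), then $\sigma_2$, and so on, until the budget runs out partway through partition $\sigma_{i_0+1}$. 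With $i_0$ defined by $\sum_{i=1}^{i_0}|\sigma_i| \le \alpha < \sum_{i=1}^{i_0+1}|\sigma_i|$, this allocation sets $\alpha_i = |\sigma_i|$ for $i \le i_0$, sets $\alpha_{i_0+1} = \alpha - \sum_{i=1}^{i_0}|\sigma_i|$, and sets $\alpha_i = 0$ for $i > i_0+1$. Plugging this into \eqref{eq:success} yields exactly the claimed value $\sum_{i=1}^{i_0}\phi_i + (\alpha - \sum_{i=1}^{i_0}|\sigma_i|)\cdot\frac{\phi_{i_0+1}}{|\sigma_{i_0+1}|}$, so it remains only to prove optimality of the greedy allocation.

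For optimality I would use a direct exchange (swapping) argument. Let $\{\alpha_i\}$ be any feasible allocation and let $\{\alpha_i^\ast\}$ be the greedy one. If $\{\alpha_i\}$ differs from $\{\alpha_i^\ast\}$, there must exist indices $j < k$ with $\alpha_j < |\sigma_j|$ (partition $j$ is not yet full) and $\alpha_k > 0$ (partition $k$ has received some effort), because any allocation that fills the low-index partitions before assigning anything to higher-index ones must coincide with $\{\alpha_i^\ast\}$ given the total budget $\alpha$. Now move a quantity $\delta = \min(|\sigma_j| - \alpha_j,\ \alpha_k) > 0$ of effort from partition $k$ to partition $j$: the total budget is preserved, feasibility is maintained, and the change in expected success is $\delta\big(\frac{\phi_j}{|\sigma_j|} - \frac{\phi_k}{|\sigma_k|}\big) \ge 0$ by the density ordering. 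Iterating this (the process terminates since each step either fills a partition or empties one, strictly decreasing a suitable potential) transforms any allocation into the greedy one without ever decreasing $E(\text{success})$, which proves the greedy allocation is optimal.

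The only subtle point — and the main thing to state carefully rather than a genuine obstacle — is the role of the non-uniformity/strict-vs-weak inequality assumption. With weak inequalities $\frac{\phi_j}{|\sigma_j|} \ge \frac{\phi_k}{|\sigma_k|}$ the exchange argument still gives $E_{max}$ as the maximum value; ties among densities just mean the maximizing allocation is not unique (one may reorder effort among equal-density partitions freely), but the optimal \emph{value} is unchanged, so the formula stands. I would also note the boundary/degenerate cases for completeness: if $\alpha < |\sigma_1|$ then $i_0 = 0$ and the formula reduces to $\alpha \cdot \frac{\phi_1}{|\sigma_1|}$; and since the hypothesis $\alpha < |\sigma|$ guarantees the budget cannot exhaust all partitions, $i_0 + 1 \le n$ so the term $\frac{\phi_{i_0+1}}{|\sigma_{i_0+1}|}$ is always well-defined.
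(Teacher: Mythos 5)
Your proposal is correct, and it reaches optimality by a genuinely different route than the paper. The paper's proof is a single, non-iterative comparison: for the greedy allocation $\{\alpha_i\}$ and an arbitrary feasible allocation $\{\alpha^{'}_i\}$ it writes $\Delta = \sum_i (\alpha_i-\alpha^{'}_i)\cdot\frac{\phi_i}{|\sigma_i|}$, splits the sum at $i_0$ and $i_0+1$, and uses the density ordering together with the signs $\alpha_i-\alpha^{'}_i\ge 0$ (for $i\le i_0$) and $\alpha^{'}_i-\alpha_i\ge 0$ (for $i\ge i_0+2$) to replace every density by the pivot density $\frac{\phi_{i_0+1}}{|\sigma_{i_0+1}|}$, yielding $\Delta \ge \bigl(\sum_i\alpha_i-\sum_i\alpha^{'}_i\bigr)\cdot\frac{\phi_{i_0+1}}{|\sigma_{i_0+1}|}=0$ in one algebraic chain, with no iteration or termination reasoning. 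Your proof instead uses the classic fractional-knapsack exchange: repeatedly shift $\delta=\min(|\sigma_j|-\alpha_j,\alpha_k)$ of effort from a higher-index source $k$ to a lower-index non-full $j$, never decreasing the objective, until the greedy allocation is reached. This is sound (and fits nicely with the knapsack analogy the paper itself draws later), handles ties transparently, and is arguably more intuitive; its cost is that the termination step you wave at with ``a suitable potential'' needs to be made precise, e.g.\ always take $j$ to be the minimal non-full index and observe that each swap either extends the full prefix (at most $n$ times) or empties a source that cannot regain effort before the next fill, bounding the number of swaps by $O(n^2)$. Your treatment of the value computation, of ties among densities, and of the boundary cases ($i_0=0$, and $i_0+1\le n$ because $\alpha<|\sigma|$) matches what the theorem requires.
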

\begin{proof}
Let ,
\begin{equation}\label{eq:effort}
\alpha_{i} = \begin{cases} |\sigma_{i}| & 1 \le i \le i_{0} \\ 
\alpha -  \sum\limits_{i=1}^{i_{0}}{|\sigma_{i}|} & i = i_{0} + 1\\
0 & i_{0}+2 \le i \le n \end{cases} 
\end{equation}
Then, by equation (\ref{eq:success}) the expected success for the attacker is,
\begin{equation}
E(success) =  \sum\limits_{i=1}^{i_{0}} \phi_{i}  + (\alpha -  \sum\limits_{i=1}^{i_{0}} |\sigma_{i}|) \cdot \frac {\phi_{i_0+1}}{|\sigma_{i_0+1}|}
\end{equation}
Now, we show that the attacker cannot do better than this. Suppose that the same effort $\alpha$ is distributed across $n$ partitions of the search space $\sigma$ in other way, {\em i.e.} $\alpha = \sum\limits_{i=1}^{n} \alpha^{'}_{i}$. Then, by equation (\ref{eq:success}) we have,
\begin{equation}
E^{'}(success) =  \sum\limits_{i=1}^{n} \alpha^{'}_{i} \cdot \frac{\phi_{i}}{|\sigma_{i}|}
\end{equation}
Now we have,
\begin{equation}
\Delta = E(success)  - E^{'}(success)
\end{equation}
\begin{equation}
\Delta =  \sum\limits_{i=1}^{n}  (\alpha_{i} - \alpha^{'}_{i}) \cdot \frac{\phi_{i}}{|\sigma_{i}|}
\end{equation}
\begin{equation}\label{eq:ineq}
\begin{split}
\Delta =  \sum\limits_{i=1}^{i_{0}}  (\alpha_{i} - \alpha^{'}_{i}) \cdot \frac{\phi_{i}}{|\sigma_{i}|} +  (\alpha_{i_{0}+1} - \alpha^{'}_{i_{0}+1}) \cdot \frac{\phi_{i_{0}+1}}{|\sigma_{i_{0}+1}|} \\-  \sum\limits_{i=i_{0}+2}^{n}  (\alpha^{'}_{i} - \alpha_{i}) \cdot \frac{\phi_{i}}{|\sigma_{i}|}
\end{split} 
\end{equation}
Using equation (\ref{eq:effort}), we get $(\alpha_{i} - \alpha^{'}_{i})$ = $(|\sigma_{i}| - \alpha^{'}_{i}) \ge 0$, for $i \le i_{0}$. Also we know that $\frac{\phi_{i}}{|\sigma_{i}|} \ge \frac{\phi_{i_{0}+1}}{|\sigma_{i_{0}+1}|}$, for $i \le i_{0}$. 
\begin{equation}\label{eq:ref1}
\sum\limits_{i=1}^{i_{0}}  (\alpha_{i} - \alpha^{'}_{i}) \cdot \frac{\phi_{i}}{|\sigma_{i}|} \ge \sum\limits_{i=1}^{i_{0}}  (\alpha_{i} - \alpha^{'}_{i}) \cdot \frac{\phi_{i_0+1}}{|\sigma_{i_0+1}|}
\end{equation}
Again using equation (\ref{eq:effort}), we have $(\alpha^{'}_{i} - \alpha_{i})$ = $\alpha^{'}_{i} \ge 0$ for $i \ge i_{0}+2$, and we know that $\frac{\phi_{i_0+1}}{|\sigma_{i_0+1}|} \ge \frac{\phi_{i}}{|\sigma_{i}|}$ for $i \ge i_{0+2}$. 
\begin{equation}\label{eq:ref2}
\sum\limits_{i=i_{0}+2}^{n}  (\alpha^{'}_{i} - \alpha_{i}) \cdot \frac{\phi_{i_0+1}}{|\sigma_{i_0+1}|} \ge  \sum\limits_{i=i_{0}+2}^{n}  (\alpha^{'}_{i} - \alpha_{i}) \cdot \frac{\phi_{i}}{|\sigma_{i}|}
\end{equation}
Using equations (\ref{eq:ineq}), (\ref{eq:ref1}) and (\ref{eq:ref2}), we get,
\begin{equation}
\begin{split}
\Delta  \ge  \sum\limits_{i=1}^{i_{0}}  (\alpha_{i} - \alpha^{'}_{i}) \cdot \frac{\phi_{i_0+1}}{|\sigma_{i_0+1}|} +  (\alpha_{i_{0}+1} - \alpha^{'}_{i_{0}+1}) \cdot \frac{\phi_{i_{0}+1}}{|\sigma_{i_{0}+1}|} \\ - \sum\limits_{i=i_{0}+2}^{n}  (\alpha^{'}_{i} - \alpha_{i}) \cdot \frac{\phi_{i_0+1}}{|\sigma_{i_0+1}|}
\end{split} 
\end{equation}
\begin{equation}
\Delta \ge (\sum\limits_{i=1}^{n}  \alpha_{i} - \sum\limits_{i=1}^{n} \alpha^{'}_{i}) \cdot  \frac{\phi_{i_0+1}}{|\sigma_{i_0+1}|} = 0
\end{equation}
\noindent
Hence, the distribution of effort $\alpha$ as given in equation (\ref{eq:effort}) is optimal, {\em i.e.} {\em the benefit is maximum if the partitions are explored greedily in decreasing order of density.}
\end{proof}
\newtheorem{Observation}{Observation}
\begin{Observation}
If the size of all $n$ partitions is equal, {\em i.e.} $ |\sigma_{1}| = |\sigma_{2}| = \ldots = |\sigma_{n}|$, then 
\begin{equation}
i_{0} \le \alpha \cdot \frac{1}{\frac{|\sigma|}{n}}
\end{equation}
\end{Observation}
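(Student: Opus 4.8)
The plan is to unwind the definition of $i_0$ supplied by the Theorem under the extra hypothesis that every partition has the same cardinality. Writing $s$ for this common size, additivity of partition sizes gives $|\sigma| = \sum_{i=1}^{n}|\sigma_i| = n\cdot s$, hence $s = |\sigma|/n$. So the whole content of the Observation is a specialization of the index condition of the Theorem to the uniform case.

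Next I would invoke the characterization of $i_0$ from the Theorem, namely that $i_0$ is the index with $\sum_{i=1}^{i_0}|\sigma_i| \le \alpha < \sum_{i=1}^{i_0+1}|\sigma_i|$. Only the left-hand inequality is needed. Since each summand equals $s$, the left side collapses to $i_0\cdot s$, so $i_0\cdot s \le \alpha$. Dividing by $s = |\sigma|/n > 0$ yields $i_0 \le \alpha\big/\frac{|\sigma|}{n}$, which is exactly the claimed bound.

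There is essentially no obstacle here: the only thing to verify is that the common partition size is $|\sigma|/n$, and that is immediate from the fact that the $\sigma_i$ partition $\sigma$. If a sharper statement were wanted, combining the left and right inequalities would give $i_0 = \big\lfloor \alpha n/|\sigma| \big\rfloor$ (with equality $i_0 = \alpha n/|\sigma|$ when $\alpha$ is an exact multiple of $s$), but the stated bound follows from the weaker left inequality alone, which is what I would present.
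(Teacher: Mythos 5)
Your argument is correct: with all partitions of common size $s=|\sigma|/n$, the defining condition $\sum_{i=1}^{i_0}|\sigma_i|\le\alpha$ from Theorem 1 collapses to $i_0\cdot\frac{|\sigma|}{n}\le\alpha$, which is exactly the stated bound. The paper offers no proof for this Observation, treating it as immediate, and your derivation is precisely the intended one-line specialization.
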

\newtheorem{Lemma}{Lemma}
\begin{Lemma}
The expected success of an attacker with the limited computational power $\alpha < |\sigma|$ when the partition densities are uniform, {\em i.e.} $\frac{\phi_{1}}{|\sigma_{1}|} = \frac{\phi_{2}}{|\sigma_{2}|} = \ldots = \frac{\phi_{n}}{|\sigma_{n}|}$ is
\begin{equation}
E_{uniform}(success) = \frac{\phi}{|\sigma|} \cdot \alpha
\end{equation}
\end{Lemma}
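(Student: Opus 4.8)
The plan is to observe that the uniform-density hypothesis makes the summand in equation (\ref{eq:success}) independent of the index $i$, so that the expected success ceases to depend on how the budget $\alpha$ is split among the partitions. First I would introduce the common density value $d := \frac{\phi_{1}}{|\sigma_{1}|} = \frac{\phi_{2}}{|\sigma_{2}|} = \cdots = \frac{\phi_{n}}{|\sigma_{n}|}$, so that $\phi_{i} = d\,|\sigma_{i}|$ for every $i$. Summing this identity over $i = 1,\dots,n$ and using $\phi = \sum_{i=1}^{n}\phi_{i}$ together with $|\sigma| = \sum_{i=1}^{n}|\sigma_{i}|$ yields $\phi = d\,|\sigma|$, hence $d = \phi/|\sigma|$.

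Next I would substitute this back into equation (\ref{eq:success}). For \emph{any} admissible allocation $\{\alpha_{1},\dots,\alpha_{n}\}$ with $\alpha_{i}\le|\sigma_{i}|$ and $\sum_{i=1}^{n}\alpha_{i} = \alpha$, we get
\[
E(success) = \sum_{i=1}^{n}\alpha_{i}\cdot\frac{\phi_{i}}{|\sigma_{i}|} = d\sum_{i=1}^{n}\alpha_{i} = d\,\alpha = \frac{\phi}{|\sigma|}\cdot\alpha .
\]
Since this quantity is the same for every allocation, it is simultaneously the maximum and the minimum attainable, so in particular $E_{uniform}(success) = \frac{\phi}{|\sigma|}\cdot\alpha$, as claimed.

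There is no real technical obstacle here; the computation is a one-line consequence of Theorem~1's setup. The point worth emphasizing instead is interpretive: in contrast to the non-uniform case, where exploring partitions in decreasing order of density was strictly advantageous, when densities are uniform the greedy strategy confers no benefit at all — every way of spending $\alpha$ produces identical expected success. This is exactly why uniform partition densities are the natural design target for a countermeasure, and it sets up the later claim that a partition attacker can be thwarted only when the densities are uniform.
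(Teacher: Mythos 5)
Your proposal is correct and follows essentially the same route as the paper: both introduce the common density value, show it equals $\phi/|\sigma|$ by summing $\phi_i = d\,|\sigma_i|$ over all partitions, and substitute into equation (\ref{eq:success}) to factor out $\sum_i \alpha_i = \alpha$. The added remark that the result holds for every admissible allocation (so greedy ordering confers no advantage) is a nice interpretive point consistent with the paper's subsequent use of the lemma.
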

\begin{proof}
By equation (\ref{eq:success}) we get,
\begin{equation}\label{eq:lem1}
E_{uniform}(success) =  \sum\limits_{i=1}^{n} \frac{\phi_{i}}{|\sigma_{i}|}\cdot \alpha_{i}
\end{equation}
As the partition densities are uniform we have,
\begin{equation}\label{eq:lem2}
\frac{\phi_{1}}{|\sigma_{1}|} = \frac{\phi_{2}}{|\sigma_{2}|} = \ldots = \frac{\phi_{n}}{|\sigma_{n}|} = c 
\end{equation}
Thus, $\phi_{1} = c \cdot |\sigma_{1}|$  , $\phi_{2} = c \cdot |\sigma_{2}|$, $\ldots$ , $\phi_{n} = c \cdot |\sigma_{n}|$. Adding these we get,
\begin{equation}\label{eq:lem3}
c = \frac{\sum\limits_{i=1}^n \phi_{i}}{\sum\limits_{i=1}^n |\sigma_{i}|} = \frac{\phi}{|\sigma|}
\end{equation}
Using equations (\ref{eq:lem1}), (\ref{eq:lem2}) and (\ref{eq:lem3}) we get,
\begin{align}
E_{uniform}(success) &=  \sum\limits_{i=1}^{n} \frac{\phi}{|\sigma|} \cdot \alpha_{i}\\
E_{uniform}(success) &=  \frac{\phi}{|\sigma|} \cdot \sum\limits_{i=1}^{n} \alpha_{i}\\
E_{uniform}(success) &=  \frac{\phi}{|\sigma|} \cdot \alpha
\end{align}
\end{proof}
\noindent
Thus, if the partition densities are uniform then the attacker with the limited computational resource $\alpha$ cannot expect to get more than $\frac{\phi}{|\sigma|} \cdot \alpha$ passwords.

Another partition level feature that partition attacker can use is the partition probability. The partition probabilities are given by $p_\sigma=\{\frac{\phi_{1}}{\phi},\frac{\phi_{2}}{\phi},\ldots,\frac{\phi_{n}}{\phi}\}$. The partition $\sigma_k$ is said to be more popular than another partition $\sigma_i$ if $\phi_k > \phi_i$.  Thus, exploring the partitions in decreasing order of probabilities also seems to be a good attack strategy. But, we prove that the maximum benefit is achieved only if the computational resource $\alpha$ is spent to explore the denser partitions rather than the popular partitions.

\begin{Lemma}
Suppose that the partition densities are non-uniform. Further, without loss of generality assume that $\frac{\phi_{1}}{|\sigma_{1}|} \geq \frac{\phi_{2}}{|\sigma_{2}|} \geq \ldots \geq \frac{\phi_{n}}{|\sigma_{n}|}$. If $\exists k > j$ such that $\sum\limits_{i=1}^{j} |\sigma_{i}| \ge |\sigma_{k}|$, then $\sum\limits_{i=1}^{j} \phi_{i} \ge \phi_{k}$.
\end{Lemma}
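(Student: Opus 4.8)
The plan is to exploit the density ordering directly: since $k>j$, partition $\sigma_k$ is (weakly) the least dense among $\sigma_1,\ldots,\sigma_j,\sigma_k$, so each of the first $j$ partitions carries at least as many passwords per unit size as $\sigma_k$ does. Concretely, I would first rewrite $\phi_i = \frac{\phi_i}{|\sigma_i|}\cdot|\sigma_i|$ and use $\frac{\phi_i}{|\sigma_i|}\ge\frac{\phi_k}{|\sigma_k|}$ for every $i\le j$ (which holds because the sequence of densities is non-increasing and $i\le j<k$), obtaining the pointwise bound
\begin{equation}
\phi_i \ge \frac{\phi_k}{|\sigma_k|}\cdot|\sigma_i|, \qquad 1\le i\le j.
\end{equation}

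Next I would sum this over $i=1,\ldots,j$ to get
\begin{equation}
\sum_{i=1}^{j}\phi_i \ \ge\ \frac{\phi_k}{|\sigma_k|}\cdot\sum_{i=1}^{j}|\sigma_i|,
\end{equation}
and then invoke the hypothesis $\sum_{i=1}^{j}|\sigma_i|\ge|\sigma_k|$ together with the non-negativity of the density $\frac{\phi_k}{|\sigma_k|}$ to conclude
\begin{equation}
\sum_{i=1}^{j}\phi_i \ \ge\ \frac{\phi_k}{|\sigma_k|}\cdot|\sigma_k| \ =\ \phi_k,
\end{equation}
which is exactly the claim.

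There is essentially no hard step here; the only point needing a word of care is the (implicit) assumption that $|\sigma_k|>0$ so that dividing by $|\sigma_k|$ and multiplying back is legitimate — this is fine since partition sizes are at least $1$ by definition. If one wanted to be scrupulous about the degenerate case $\phi_k=0$, the inequality is then trivial because the left-hand side is a sum of non-negative terms. The conceptual takeaway, which motivates the surrounding discussion, is that this lemma shows the greedy density-order strategy of the Theorem dominates the "popularity-order" strategy: whenever exploring the first $j$ dense partitions already costs as much effort as exploring a later popular partition $\sigma_k$, it also yields at least as many recovered passwords.
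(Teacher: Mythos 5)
Your proposal is correct and is essentially the paper's own argument: the paper cross-multiplies the density inequalities $\phi_i\cdot|\sigma_k|\ge\phi_k\cdot|\sigma_i|$ for $i\le j$, sums them, and then applies the hypothesis $\sum_{i=1}^{j}|\sigma_i|\ge|\sigma_k|$, which is exactly your pointwise bound $\phi_i\ge\frac{\phi_k}{|\sigma_k|}|\sigma_i|$ summed and combined with the same hypothesis. No substantive difference beyond your (harmless) extra remarks on $|\sigma_k|>0$ and $\phi_k=0$.
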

\begin{proof}
Given that,
\begin{equation}
\frac{\phi_{1}}{|\sigma_{1}|} \ge \frac{\phi_{2}}{|\sigma_{2}|} \ge \ldots \ge \frac{\phi_{j}}{|\sigma_{j}|} \ldots \ge \frac{\phi_{k}}{|\sigma_{k}|}
\end{equation}
Therefore, we have $\phi_{1}\cdot |\sigma_{k}| \ge \phi_{k}\cdot |\sigma_{1}|$ , $\phi_{2}\cdot |\sigma_{k}| \ge \phi_{k}\cdot |\sigma_{2}|$ , $\ldots$ , $\phi_{j}\cdot |\sigma_{k}| \ge \phi_{k}\cdot |\sigma_{j}|$. 
Adding these we get,
\begin{equation}
|\sigma_{k}| \cdot \sum\limits_{i=1}^{j} \phi_{i} \ge \phi_{k} \sum\limits_{i=1}^{j} |\sigma_{i}|
\end{equation}
\begin{equation}
\frac{\sum\limits_{i=1}^{j} \phi_{i}}{\sum\limits_{i=1}^{j} |\sigma_{i}|} \ge \frac{\phi_{k}}{|\sigma_{k}|} 
\end{equation}
Since $\sum\limits_{i=1}^{j} |\sigma_{i}| \ge |\sigma_{k}|$,  hence $\sum\limits_{i=1}^{j} \phi_{i} \ge \phi_{k}$
\end{proof}
\noindent
 Suppose that $\phi_k \ge \phi_i,\forall i\le j$, {\em i.e.} the $k^{th}$ partition is more popular than any of the first $j$ partitions. Further suppose that the attacker can explore either the first $j$ denser partitions or the $k^{th}$ denser partition (which is also the most popular partition), {\em i.e.}, $\alpha =  \sum\limits_{i=1}^{j} |\sigma_{i}| = |\sigma_{k}|$. Then $Lemma \ 2$ implies that exploring all $j$ partitions is more beneficial than exploring the $k^{th}$ partition since we are guaranteed to have $\sum\limits_{i=1}^{j} \phi_{i} \ge \phi_{k}$. In other words, {\em the benefit is maximum if the attacker utilizes the available computational resource $\alpha$ to explore the denser partitions rather than the popular partitions.} 
\\
\\
\textbf{Reduction to the Fractional Knapsack Problem.} The partition attacker's problem of distributing the limited computational resource $\alpha$ across $n$ partitions $\{\sigma_1, \sigma_2,\ldots,\sigma_n\}$ to maximize the password guessing success can be reduced to the fractional knapsack problem~\cite{Goodrich} as shown in Table~\ref{tab:knapsack}. Here, the computational resource $\alpha$ available to the attacker is analogous to the capacity $W$ of the knapsack. Further,  every partition $\sigma_i$ can be viewed as an object $o_i$, with the size $|\sigma_i|$ of the partition $\sigma_i$ representing the weight $w_i$ of the object $o_i$ and the number of passwords $\phi_i$ representing the value $v_i$ associated with the object $o_i$. In the partition attack model, the goal is to recover the maximum number of passwords $\sum\limits_{i=1}^{n} \alpha_{i} \cdot \frac{\phi_i}{|\sigma_i|}$, where $\alpha_i \le |\sigma_i|$,  subject to the computational constraint  $\alpha \ge \sum\limits_{i=1}^{n} \alpha_{i}$. Analogously in the knapsack problem, the goal is to maximize the total value $\sum\limits_{i=1}^{n} x_i \cdot \frac{v_i}{w_i}$, where $x_i \le w_i$, subject to the capacity constraint $W \ge \sum\limits_{i=1}^{n} x_{i}$. The fractional knapsack problem has a polynomial time solution which can be obtained by sorting objects in decreasing order of $\frac{v_i}{w_i}$ and then picking the objects in a sequence until the knapsack is full. Similarly, the optimal attack strategy of a partition attacker is to explore the partitions in decreasing order of density $\frac{\phi_i}{|\sigma_i|}$ till the computational resource  $\alpha$ is completely exhausted. Due to sorting, the algorithm takes time $O(n\cdot log n)$. However, by adapting the algorithm to find weighted medians~\cite{Korte}, this problem can be solved in linear time $O(n)$.

\begin{table}[h]
\scriptsize
\centering
\caption{Analogy between the Partition Attacker and the Fractional Knapsack Problem.}~\label{tab:knapsack}
\begin{tabular}{|c|c|c|} \hline
Category & Partition Attacker & Fractional Knapsack\\ \hline
Capacity	     & $\alpha$ & $W$\\
Objects               & $\{\sigma_1, \sigma_2,\ldots,\sigma_n\}$ & $\{o_1, o_2,\ldots,o_n\}$	\\
Weights                & $\{|\sigma_1|, |\sigma_2|,\ldots,|\sigma_n|\}$ & $\{w_1, w_2,\ldots,w_n\}$\\	
Values                & $\{\phi_1, \phi_2,\ldots,\phi_n\}$ & $\{v_1, v_2,\ldots,v_n\}$\\
Constraint               & $\alpha \ge \sum\limits_{i=1}^{n} \alpha_{i}$  & $W \ge \sum\limits_{i=1}^{n} x_{i}$\\		
Maximize & $\sum\limits_{i=1}^{n} \alpha_{i} \cdot \frac{\phi_i}{|\sigma_i|}$ & $\sum\limits_{i=1}^{n} x_i \cdot \frac{v_i}{w_i}$\\
\hline
\end{tabular}
\end{table}

\section{Generality of the Attack Model}
We emphasize that the space partition attack model is more general. The partition attacker uses information from various sources to refine partitions and estimate their densities. Subsequently, the attacker recovers passwords from the target database by exploring the partitions in decreasing order of density. We demonstrate the generality of the space partition attack model by explaining the existing, well-known attacks using our framework. Specifically, we show that brute-force attacks, dictionary-based attacks~\cite{John,Hashcat2}, grammar-based attacks~\cite{Weir}, probabilistic attacks~\cite{BonneauMetric} and attacks on random password generators~\cite{Ganesan} are different instances of the partition attack model.

\subsection{Brute-Force Attacks}
The brute-force attacker possesses no information regarding the target database and treats the entire search space as one big partition. In this case, the attacker assumes every password to be equally likely and enumerates all possible password combinations for guessing. This attack is guaranteed to recover every password from the system, however, it is generally computationally infeasible to generate and verify every possible combination against the target database. This single partitioning instance covers the brute-force attack model.

\subsection{Dictonary Attacks}
Users do not choose passwords uniformly~\cite{Morris,Klein,Moshe,Florencio}. They often choose simple passwords based on familiar words. The password guessing tools such as John the Ripper (JTR)~\cite{John} and Hashcat~\cite{Hashcat2} exploit this observation and use carefully crafted dictionaries and mangling rules (learned from previous breaches) to perform password guessing. These so called dictionary attacks proceed as follows. First, the entire dictionary of size $N$ is matched against the target password database. Then, the most popular mangling rule is applied to every dictionary word and subsequently the resulting mangled dictionary is verified against the target database. This process is repeated for every mangling rule. We note that there is an implicit assumption regarding the density of the partitions. The partition containing the dictionary words is assigned the highest density and is therefore guessed first. Then the most popular mangling rule is applied on the dictionary to generate the second dense partition and so on.

For instance, consider a dictionary $W = \{lion, deer, tiger, horse\}$ of size $N = 4$. Let $R = \{W, W1, W!, W12\}$ be the set of mangling rules with the probability distribution $P = \{\frac{\phi_1}{\phi}=0.4, \frac{\phi_2}{\phi}=0.3, \frac{\phi_3}{\phi}=0.2,\frac{\phi_4}{\phi}=0.1\}$. JTR performs guessing as follows. First, it verifies the entire dictionary $W$ against the target database, then it appends ‘1’ to the dictionary and verifies the mangled dictionary $W1 = \{lion1, deer1, tiger1, horse1\}$, then it guesses $W! = \{lion!, deer!, tiger!$ $,horse!\}$ and finally $W12 = \{lion12, deer12, tiger12, horse12\}$ is verified. As all partitions $W$, $W1$, $W!$ and $W12$ are of equal size $|\sigma_i|=4$, the order of guesses generated in the probability order $\frac{\phi_i}{\phi}$ is same as those generated in the density order $\frac{\phi_i}{\sigma_i}$.

\subsection{Grammar-based Attacks}
In order to further speed-up the password guessing process, Weir {\em et al.}~\cite{Weir} proposed an online algorithm to derive a probabilistic context free grammar from the training dataset. The algorithm generates most effective word-mangling rules and subsequently applies them on a dictionary in decreasing order of probability. The authors introduced the concepts of base structure, pre-terminal structure and terminal structure to capture the common patterns in the passwords. For instance, $S_1L_3D_2$ is the base structure, representing passwords that begins with 1 symbol followed by 3 lowercase letters and ending with 2 digits. Substituting the values of symbol and digits in a base structure results in a pre-terminal structure {\em e.g.,} $\$L_312$ and substituting the values of alpha variable in a pre-terminal structure with a dictionary word produces a terminal structure {\em e.g.,} $\$cat12$ (Table~\ref{tab:grammar}). Terminal-structures are the actual password guesses.
\begin{table}[h]
\scriptsize
\centering
\caption{An example of password grammar.}~\label{tab:grammar}
\begin{tabular}{|l|c|c|} \hline
Structure	&	Example \\ \hline
Base	&        $S_1L_3D_2$ \\
Pre-Terminal	&	$\$L_312$ \\
\textbf{Terminal (Guess)}	& $\mathbf{\$cat12}$ \\
\hline
\end{tabular}
\end{table}

Weir{\em et al.}~\cite{Weir} proposed and compared the effectiveness of two attack strategies with the popular password guessing tool John the Ripper (JTR). The first attack strategy generates guesses in {\em pre-terminal probability order} and the second strategy generates guesses in {\em terminal probability order}. They simulated attacks on 3 different real-world password lists and found that the strategy based on {\em terminal probability order} was more efficient than JTR while the strategy based on {\em pre-terminal probability order} performed similar (in some cases worse) to JTR. They provided no explanation for this behavior, rather they were surprised by this result. They stated the following  {\em ``A surprising result to us was that when we used pre-terminal probability order, it did not result in a noticeable improvement over John the Ripper’s default rule set."}

We use our partition-based attacker framework to explain this behavior. {\em We show that the strategy based on pre-terminal probability order generates pre-terminal guesses in decreasing order of probability, while the second strategy based on terminal probability order actually generates pre-terminal guesses in decreasing order of density.} We have already proved that generating guesses in a density order is the most efficient strategy ($Theorem \ 1$). Thus, we just need to show that the {\em terminal probability-based attack}  generates guesses in same order as that of {\em pre-terminal density-based attack}.
\\
\textbf{Pre-terminal probability-based attack.} The first attack strategy as proposed by Weir {\em et al.}~\cite{Weir} is based on the probabilities of pre-terminal structures in which the attacker enumerates pre-terminals in decreasing order of probability. We call this strategy as {\em pre-terminal probability-based attack}. The attack proceeds as follows.
\begin{enumerate}
\item First, the attacker learns all pre-terminal structures and their probabilities from the training set of previously disclosed passwords. Here pre-terminals are the partitions.
\item Subsequently, the attacker finds a suitable dictionary for substituting alpha variables in the pre-terminal structures.
\item Finally, the attacker explores all pre-terminal structures in decreasing order of probability. More precisely, the attacker substitutes alpha variables in the most probable pre-terminal structure with the appropriate length dictionary words, verifies all resulting terminal structures (password guesses) against the target database, removes this pre-terminal structure from the queue and then repeat these guess-verify-remove process with the remaining pre-terminal structures.
\end{enumerate}

For illustration purpose, consider a toy example as shown in Figure~\ref{fig:preterminal}. The attacker uses previously breached databases and learns the probabilities of three pre-terminals $L_6\$1$, $L_5!$ and $\$L_412$ to be 5/10, 3/10 and 2/10 respectively. Subsequently, the attacker chooses dictionary words of length 6, 5 and 4 to replace the alpha variables $L_6$, $L_5$ and $L_4$ as shown in the figure. According to the {\em pre-terminal probability-based attack strategy}, the attacker simply explores all pre-terminals in decreasing order of probability.  The attacker generates guesses by substituting the alpha variable $L_6$ in the most probable pre-terminal $L_6\$1$ with 8 dictionary words, then by substituting the alpha variable $L_5$ in the pre-terminal $L_5!$ with 4 dictionary words and finally by substituting the alpha variable $L_4$ in the least probable pre-terminal $L_412$ by 2 dictionary words. Thus, as per pre-terminal probability-based attack guesses are generated in the following order,
\begin{enumerate}
\item $L_6\$1$ :   $\{monkey\$1, donkey\$1, jaguar\$1, rabbit\$1, turtle\$1, python\$1,$\\ $falcon\$1, parrot\$1\}$
\item	$L_5!$ :     $\{tiger!, horse!, zebra!, sheep!\}$
\item	$L_412$ : $\{\$lion12, \$deer12\}$
\end{enumerate}
\begin{figure}[h]
\centering
\includegraphics[scale=0.5]{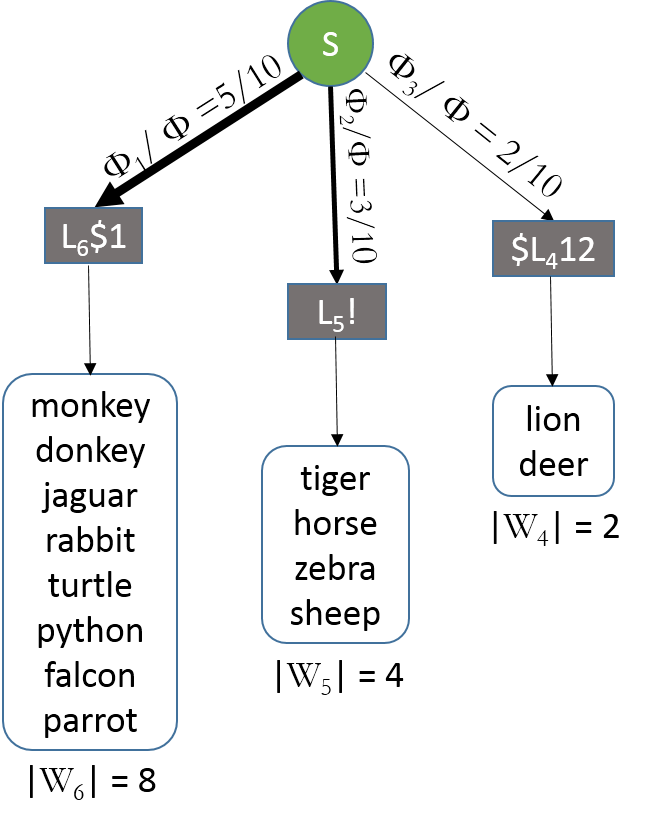}
\caption{The pre-terminal pobability-based attack as proposed by Weir {\em et al.}~\cite{Weir}. The pre-terminal $L_6\$1$ has the highest probability followed by $L_5!$ and $\$L_412$.}~\label{fig:preterminal}
\end{figure}
\noindent
\textbf{Terminal Probability Order.} The second attack strategy as proposed by Weir {\em et al.}~\cite{Weir} is based on the probabilities of terminal structures, in which the attacker generates guesses in decreasing order of terminal probability. In this strategy, the attacker also assigns the probability to every dictionary word. The attacks in~\cite{Weir} assumed all dictionary words of the same length to be equally likely. Thus, in our toy example, the probability of a 6 length dictionary word is 1/8 as there are 8 such words, similarly the probability of a 5 length dictionary word is 1/4 and 4 length dictionary word is 1/2. The terminal structure is obtained by replacing alpha variables in pre-terminal structures. Thus, the probability of a terminal is the product of probability of its pre-terminal structure and the probability assigned to a dictionary word used in the replacement of alpha variable. If all dictionary words having the same length are equally likely then all terminal structures (password guesses) generated using a given pre-terminal structure have the same probability.

To see this, let $\phi$ be the number of passwords in the training set and let $\phi_1/\phi, \phi_2/\phi, \ldots, \phi_n/\phi$ be the probabilities of the $n$ pre-terminal structures. Further suppose that $|W_m|$ is the total number of $m$ length words in the dictionary. The probability of $m$ length dictionary word as assigned by Weir {\em et al.}~\cite{Weir} is thus $1/|W_m|$. As the probability of all $m$ length dictionary words is uniform $1/|W_m|$ and the probability of the $i^{th}$ pre-terminal structure is  $\phi_i/\phi$, all terminal structures (password guesses) generated by substituting the alpha variable with a dictionary word in the $i^{th}$ pre-terminal structure have the same probability $(\phi_i/\phi)\cdot (1/|W_m|)$. The factor $1/\phi$ is associated with every terminal structure, hence we can simply drop this term, yielding the value $\phi_i/|W_m|$, which is nothing but the density of the $i^{th}$ pre-terminal. In other words, the terminal probability-based strategy basically generates password guesses in decreasing order of pre-terminal density (Figure~\ref{fig:terminal}).
{\em Thus, the terminal probability-based attack is essentially the pre-terminal density-based attack which Weir et al.~\cite{Weir} found to be more efficient than the pre-terminal probability-based attack}.

In our toy example, the order of guesses generated according to the pre-terminal density-based attack is as follows,

\begin{enumerate}
\item $\$L_412$ - $\{\$lion12, \$deer12\}$
\item 	$L_5!$ -      $\{tiger!, horse!, zebra!, sheep!\}$
\item $L_6\$1$  - $\{monkey\$1, donkey\$1, jaguar\$1, rabbit\$1, turtle\$1, python\$1,$\\$falcon\$1, parrot\$1\}$
\end{enumerate}

\begin{table}[h]
\scriptsize
\centering
\caption{Probability and density of pre-terminals in the toy example.}~\label{tab:preterminal}
\begin{tabular}{|c|c|l|} \hline
Pre-terminal	& Probability $\frac{\phi_i}{\phi}$ & Density $\frac{\phi_i}{|W_m|}$\\ \hline
$L_6\$1$	&  \textbf{5/10} & 5/8\\
$L_5!$	&  3/10 & 6/8 = 3/4 \\
$\$L_412$	&  2/10 & \textbf{8/8 = 2/2 = 1} \\
\hline
\end{tabular}
\end{table}
The value of parameters used in the toy example is given in Table~\ref{tab:preterminal}. The guesses generated using the pre-terminal density-based attack are in the exact reverse order of those generated using the pre-terminal probability-based attack.
\begin{figure}[h]
\centering
\includegraphics[scale=0.6]{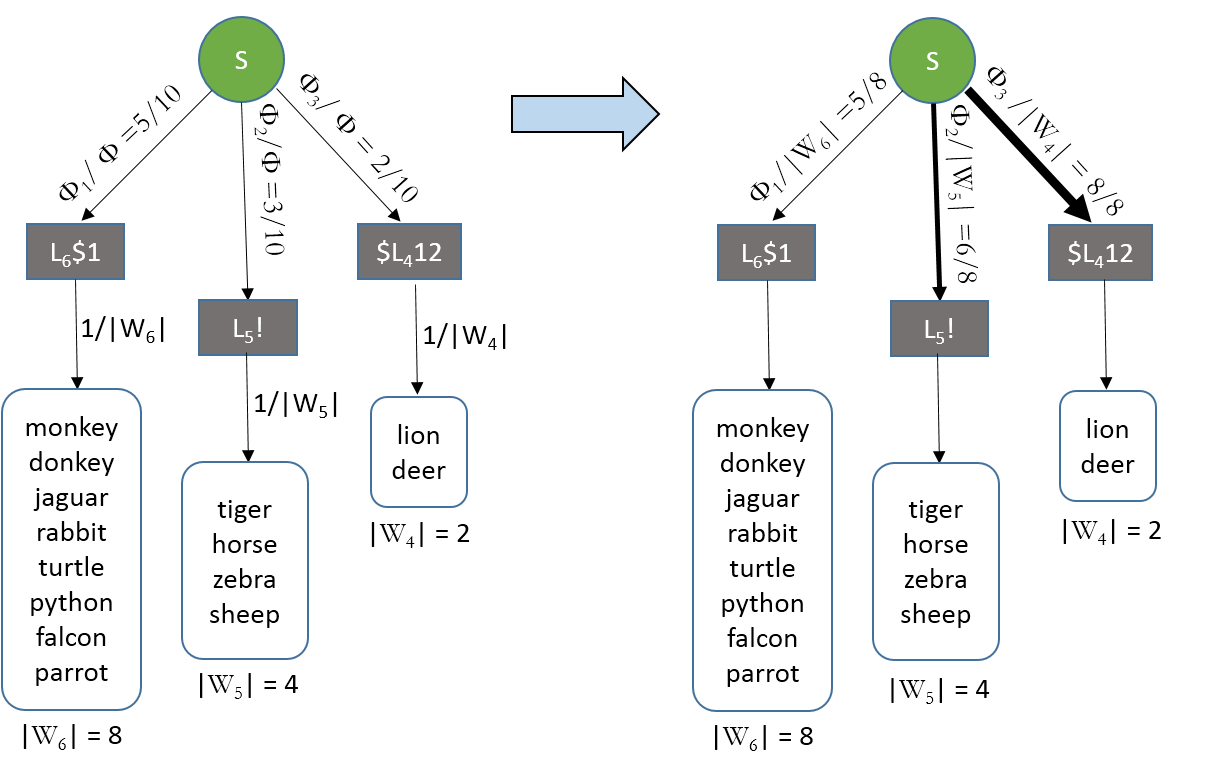}
\caption{The terminal probability-based attack as proposed by Weir {\em et al.}~\cite{Weir} is equivalent to the pre-terminal density-based attack. The pre-terminal $\$L_412$ has the highest density followed by $L_5!$ and $\L_6\$1$.}~\label{fig:terminal}
\end{figure}

\subsection{Probabilistic Attacker}
Consider a powerful-attacker (omniscient attacker) who has complete information regarding the target password database and divides the search space into unit-sized partitions (every partition has only 1 password). By $Theorem\  1$, the maximum benefit is achieved if these unit-sized partitions  are explored in decreasing order of density. In this particular case, it does not matter if the guesses are generated using partition densities $d_{\sigma} = \{\frac{\phi_{1}}{1},\frac{\phi_{2}}{1},\ldots,\frac{\phi_{n}}{1}\}$ or partition probabilities $p_{\sigma} = \{\frac{\phi_{1}}{\phi},\frac{\phi_{2}}{\phi},\ldots,\frac{\phi_{n}}{\phi}\}$, as both these strategies produce guesses in the same order. This specific instance covers Bonneau's attack model~\cite{BonneauMetric}. 

\subsection{Sandia Attack}
The offline attacks mainly exploit the fact that the human-generated passwords have non-uniform distribution. However, due to poor implementation even the random password generators can be flawed. In 1993, Ganesan and Davies~\cite{Ganesan} showed that two pronounceable password generators, namely Sandia system and NIST system, are not as secure as they claim. The Sandia system used 25 different pronounceable templates for generating a pronounceable password, {\em e.g., cvcvcvc} is one such template, where {\em c} stands for any consonant and {\em v} for any vowel. These 25 templates were of different sizes, but for the password generation they were chosen with equal probability. Consequently, all 25 templates had roughly the same number of passwords which lead to non-uniform template densities. Thus, in the case of Sandia system every template represents one partition and the attacker~\cite{Ganesan} can exploit the resulting non-uniform partition densities. Similar flaw was observed in the NIST system as well.

\subsection{Entropy-based Password Checker}
In 2001, Yan~\cite{Yan:2001:NPP:508171.508194} proposed an entropy-based password checker to prevent passwords with low entropy. The main idea of this checker was to divide password space into partitions (patterns) and then to classify passwords drawn from smaller partitions as weak. The authors illustrated the working of their checker by concentrating on 7 character alpha-numeric passwords. The total search space for a 7 character password composed using lowercase letters and digits is $36^7$. The partitions were formed based on the number of distinct lowercase letters in the password. The passwords composed of either digits, {\em e.g.,} ``1234567" or repeating a lowercase letter multiple times, {\em e.g.,} ``abbbb12" can be searched quickly ($\sim 10^7$ guesses). Whereas the passwords composed of either 7 distinct lowercase letters,  {\em e.g.,} ``jdowsna" or 6 distinct lowercase letters and 1 digit,  {\em e.g.,} ``is8kdab" requires more effort ($\sim {26 \choose 7} \cdot 7!$ guesses). Thus, passwords created using distinct letters belong to large partitions and are classified as strong, whereas passwords containing either repetitive letters or digits belong to smaller partitions and are classified as weak. 

\begin{table}[h]
\scriptsize
\centering
\caption{Different instances of a partition attacker. The partitions created by probabilistic attacker are the most granular (unit-sized) while those created by brute-force attacker are least granular (only one partition).}~\label{tab:attacker}
\begin{minipage}{8cm}
\begin{tabular}{|c|c|c|} \hline
Partition Attacker &  Partition Example\\ \hline
Brute-force & entire search space\\
Yan pattern &   passwords containing 5 lowercase letters and 2 digits\\
Bin \footnote{The bin attacker is explained in the next section.} &   $L_5D_2$\\
Pre-terminal &  $L_512$\\
Hybrid Bin & popular words such as $abcde12$ followed by bin $L_5D_2$\\
Probabilistic  & $abcde12$\\
\hline\end{tabular}
\end{minipage}
\end{table}

In the following section, we present another instance of a partition attacker, {\em bin attacker}, and show that it is much effective than the brute-force attacker. Specifically, we divide the search space  into {\em password bins} and demonstrate that the bin densities in real-world password databases are non-uniform which results in a huge benefit for the attacker.
\section{Bin Attacker}~\label{sec:binattack}
Before defining the bin attacker, we introduce the notation which will be used in the rest of the paper.
\subsection{Notation}
\noindent
$L$ - An alphabet representing the set $\{a,\ldots,z\}$ of lowercase letters.\\ 
$U$ - An alphabet representing the set $\{A,\ldots,Z\}$ of uppercase letters. \\
$D$ - An alphabet representing the set $\{0,\ldots,9\}$ of decimal digits. \\
$S$ - An alphabet representing the set of 33 special symbols such as @,\#,\$,\&,* and so on. \\
$+$ - denotes 1 or more occurrences of the alphabet. \\
$*$ -  denotes 0 or more occurrences of the alphabet. \\
$?$ -  denotes 0 or 1 occurrence of the alphabet. \\
$\{i,j\}$ - denotes at least $i$ and at most $j$ occurrences of the alphabet, where $0 \le i \le j$.\\
{$|\lambda|$} -  represents the number of elements in a set $\lambda$.
\indent
\subsection{Bins}
\begin{figure}[h]
\centering
\includegraphics[scale=0.5]{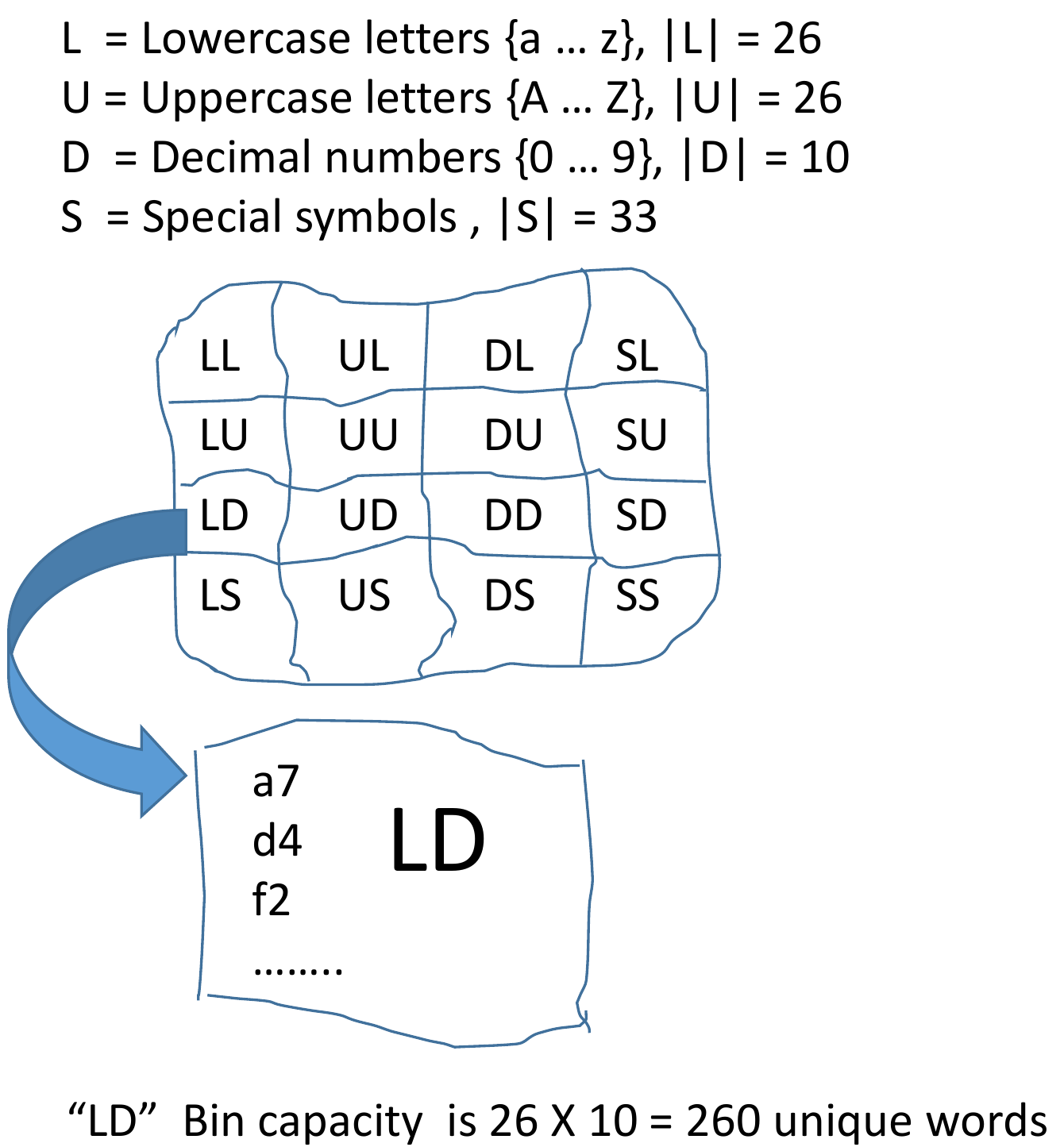}
\caption{Illustration of a bin space. The number of $l=2$ length bins derived using the alphabet set $\sigma = \{L,U,D,S\}$  is $4^l=4^2=16$.}~\label{fig:bin}
\end{figure}
Consider an alphabet set $\gamma= \{L,U,D,S\}$ composed of 4 character classes. We refer to the $l$ length strings derived from $\gamma$ as {\em password bins} or {\em bins}, $e.g.,$ $L_8$, $U_1L_7$, $S_1U_1L_5D_1$ are 8 length bins.  Every bin represents a class of passwords, {\em e.g.,} the bin $L_8$ represents 8 length passwords composed entirely of lowercase letters while the bin $U_1L_7$ represents 8 length passwords that begin with an uppercase letter followed by 7 lowercase letters. 
Thus, the password search space is divided into total $4^{l}$ bins. We define the {\em bin capacity} as the number of passwords represented by the bin (Figure~\ref{fig:bin}). For instance, the capacity of the bin $L_8$, composed of all lowercase letters is $|L_8|=26^8$. By summing the capacities of all bins we obtain,
\begin{align}
\nonumber  \sum\limits_{k=1}^{4^l}C_k &= (|L|+|U|+|D|+|S|)^l\\
					&= |\gamma|^l
\end{align}
where $C_k$ is the capacity of $k^{th}$ bin. Note that the number of bins increases exponentially with the increase in password length $l$. 
Let the maximum password length be $l_{max}$. Therefore, the size of the search space $\sigma$ is,
\begin{align}\label{eq:ss}
\nonumber |\sigma| &= \sum\limits_{l=1}^{l_{max}} |\gamma|^{l}\\
\nonumber             & = \frac{|\gamma|^{l_{max}+1} - |\gamma|}{|\gamma|-1}\\
 |\sigma|	&  \approx |\gamma|^{l_{max}}
\end{align}
\begin{definition}
The {\em bin attacker} is an instance of a partition attacker which (1) divides the password search space $\sigma$ into bins and (2) explores these bins in decreasing order of density (by $Theorem\ 1$).
\end{definition}
We show that the bin attacker performs much better than the brute-force attacker by exploring only the denser bins ($Theorem \ 1$) and completely ignoring the unutilized bins. In the next sections, we discuss the training and test datasets and then demonstrate the effectiveness of the bin attacker.
\subsection{Training and Test Data}
\begin{table}[h]
\scriptsize
\centering
\caption{Total number of passwords and utilized bins in publicly available breached databases.}~\label{tab:dataset}
\begin{tabular}{|c|c|c|c|c|c|} \hline
No & Category & Database & Total Passwords & Utilized Bins & Year\\ \hline
1 & Gaming & Rockyou	& 32,603,388 & 140,401 & 2009\\
2 & Gaming & Gamigo     & 6,919,630  & 40,365    & 2012\\
3 & Social & LinkedIn       & 5,586,887  & 102,066  & 2012\\
4 & Mail & Gmail		& 4,929,090  & 23,348    & 2014\\
5 & Mail  & Mail.ru 	& 4,664,479  & 83,061    & 2014\\
6 & Mail & Yandex 	& 1,261,810  & 26,428    & 2014\\
7 & Media & Gawker	& 1,085,085  & 5,539      & 2010 \\
8 & Misc & Yahoo		& 442,834     & 15,711    & 2012\\
9 & Software & Phpbb	& 255,421     & 7,905      & 2009\\
\hline\end{tabular}
\end{table}
\noindent
We collected 9 publicly available breached password databases~\cite{Adepts,Skull}. We were particularly interested in studying the passwords of compromised websites with a large user base. Table~\ref{tab:dataset} shows the number of passwords along with the category of breached websites and the year of breach. The table also depicts the number of bins that were used in each of the databases for creating passwords, $e.g.,$ $32.6 \ million$ passwords in the Rockyou database were derived from just 140,401 bins. These bins were not used uniformly and as depicted in Figure~\ref{fig:lendist} more than 80\% of the passwords in all breached databases were shorter ($l \le 10$). Further, none of these websites had any password composition requirements during the time of breach. The breach of these databases also revealed the insecure storage practices followed by the websites. For instance, all Rockyou passwords were stored in plaintext while longer Gawker passwords were truncated to 8 characters before hashing. {\em None of the password databases studied in this paper had their passwords salted. They were either stored in plaintext or were at most hashed.} 

In the following section, we demonstrate how the bin attacker can guess a large proportion of passwords in each of the breached databases. We use the Rockyou database to estimate the bin densities and then measure the resistance of the remaining 8 password databases against the bin attacker.
\begin{figure}[h]
\centering
\includegraphics[scale=0.27]{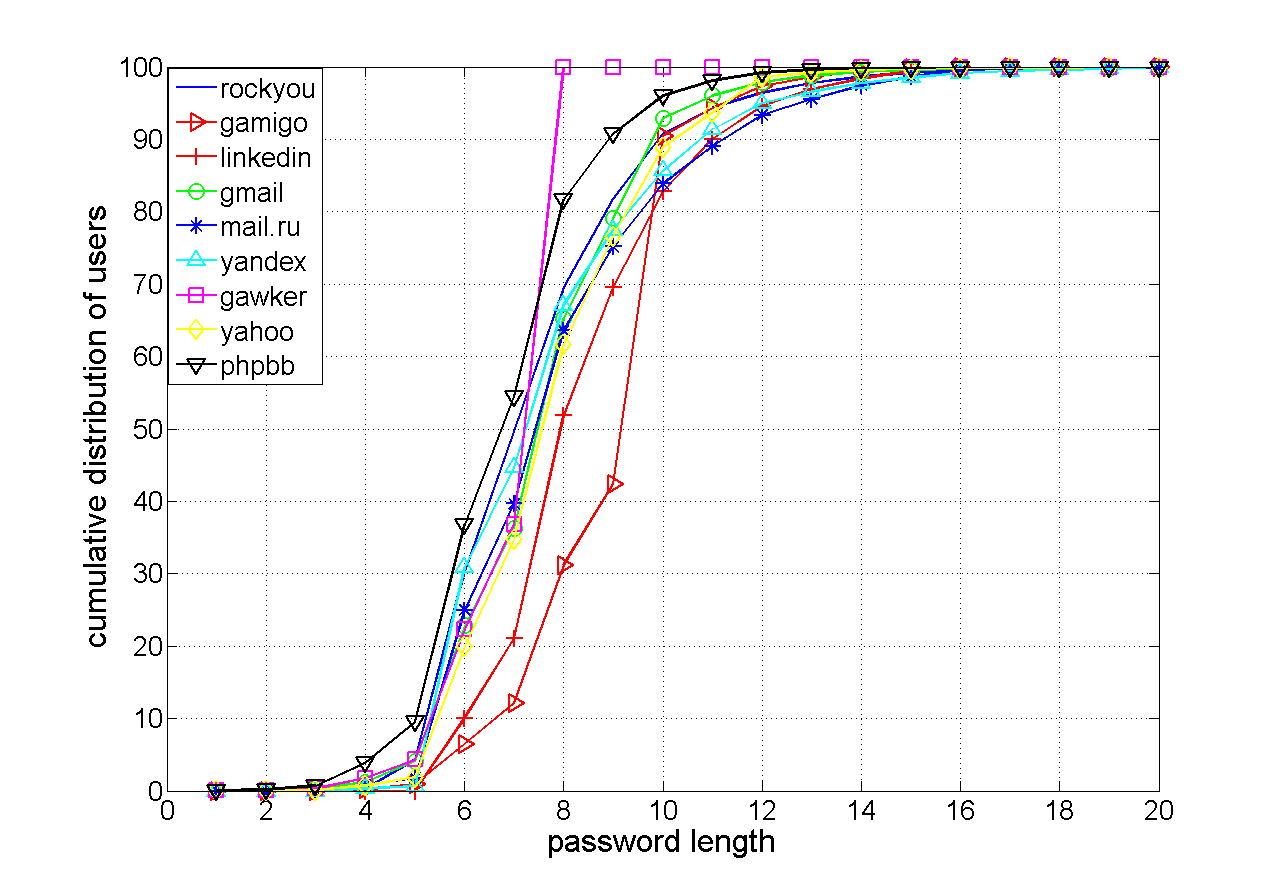}
\caption{Length-wise cumulative distribution of passwords in breached databases. More than 90\% passwords have length at most 12 ($l \le 12$).}~\label{fig:lendist}
\end{figure}
\subsection{Efficiency of the Bin Attacker}
Consider a system with $32 \ million$ registered users, $\phi = 2^{25}$. Suppose that the maximum password length $l_{max}$ is 10 and thus by equation (\ref{eq:ss}) the attack search space is $|\sigma| \approx 95^{10} = 2^{65.7}$. Today, a dedicated password cracking hardware can generate up to 350 billion ($2^{38.35}$) guesses per second~\cite{GPU}, we assume that the bin attacker can use such machine for 2.5 days ($2^{17.72}$ seconds) and generate $\alpha = 2^{38.35} \cdot 2^{17.72} \approx 2^{56}$ guesses. If the bin densities are uniform then by $Lemma \ 1$ the expected success of this bin attacker is,
\begin{equation}\label{eq:32m}
E_{uniform}(success) = \frac{\phi}{|\sigma|} \cdot \alpha = \frac{2^{25}}{2^{65.7}} \cdot 2^{56} = 2^{15.3} \approx 40,000
\end{equation}
Thus, in case of uniform bin densities the attacker can compromise only $0.13\%$ of the password database (40,000 user accounts).
\begin{figure}[h]
$\begin{array}{ll}
\subfloat[Success rate of attacker $A$ using bin densities learned from Rockyou.]{\centerline{\includegraphics[width=0.6\textwidth]{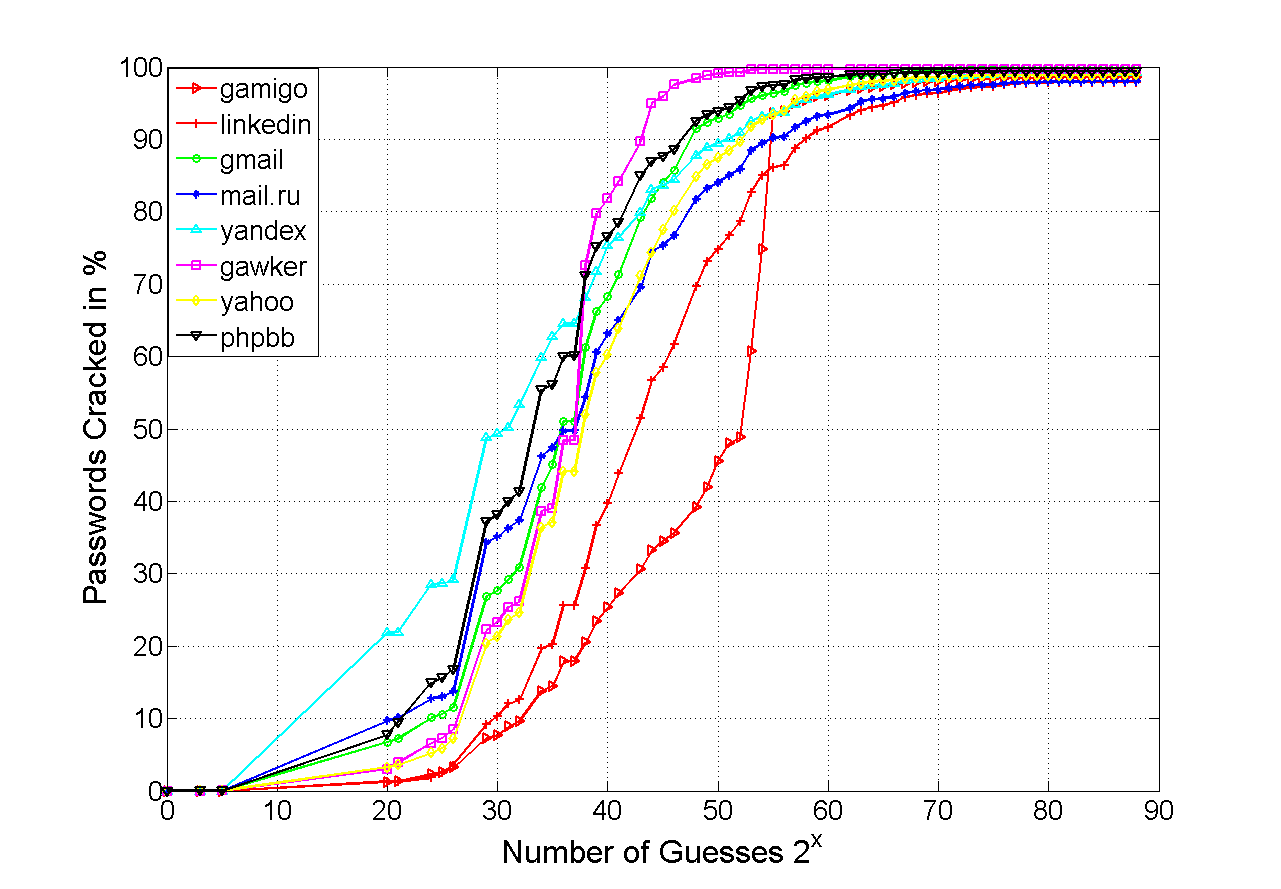}}} 
\\
\subfloat[Success rate of attacker $B$ using bin probabilities learned from Rockyou.]{\centerline{\includegraphics[width=0.6\textwidth]{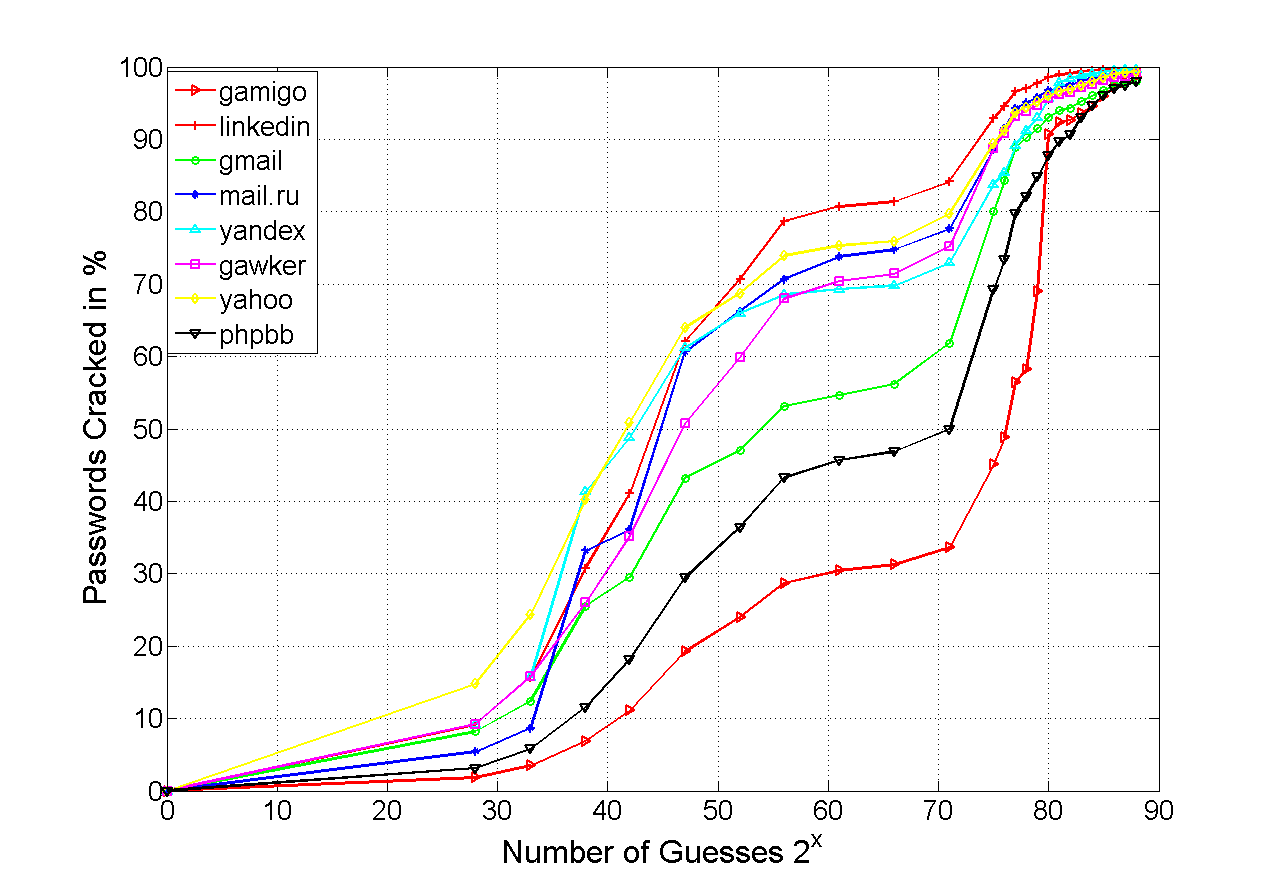}}}
\end{array}$
\caption{Comparing the success rate of attacker $A$ exploring denser bins with the success rate of attacker $B$ exploring popular bins.}
\end{figure}
However, the analysis of $32\ million$ Rockyou passwords reveals that the bin densities are highly non-uniform and with $\alpha = 2^{56}$ guesses, the bin attacker can recover nearly 94.34\% of the Rockyou database ($30.76\ million$ passwords). 
Now, we measure the resistance of the remaining 8 password databases against the bin attacker who learns the bin densities from the Rockyou database. The guessing efficiency of such bin attacker is depicted in Figure 6a and one can see that with $\alpha = 2^{56}$ guesses, the bin attacker can crack nearly 90\% passwords in every breached database. This implies that the bin densities derived from the Rockyou database closely approximates the bin densities of the remaining 8 password databases. 
\begin{table}[h]
\scriptsize
\centering
\caption{Percentage \% of passwords in the top dense bins of breached databases. These bins are smaller with length $l \le 8$.}~\label{tab:dense}
\begin{tabular}{|c|c|c|c|} \hline
Database &	$L_{\{1,8\}}$  & $D_{\{1,8\}}$  & $L_{x}D_{8-x}$ \\ \hline
Rockyou  & 26.80 & 12.19 & 17.88 \\ 
Gamigo & 6.28 & 3.13 & 8.74\\  
Linkedin  &  10.26	 & 2.71 &	15.39\\ 
Gmail	   &  27.80 & 11.72 &	17.61 \\ 
Mail.ru   &  17.32	 & 17.58 &	9.86\\
Yandex  & 11.84	& 40.03 &	6.56\\
Gawker  &  42.33 & 5.22   &	25.32\\ 
Yahoo	  &  24.44	& 5.14 &	20.78\\ 
Phpbb	  &  41.47	& 11.00 &	14.41\\  %255,421 
\hline
\end{tabular}
\end{table}
Further analysis of breached databases reveals that smaller bins ($l \le 8$) of the form $D_{+}$, $L_{+}$ and $L_{+}D_{+}$ are much denser (Table~\ref{tab:dense}). For instance, there are nearly 26.80\% ($8.74 \ million$) Rockyou users in $L_{\{1,8\}}$ bins, 12.19\% ($3.97 \ million$) users in $D_{\{1,8\}}$ bins and 17.88\% ($5.83 \ million$) users in $L_{x}D_{8-x}$ bins, where $1\le x\le 7$. Similar behavior is observed in the other breached databases as well. The theoretical password space is huge ($95^l$, for 95 printable ASCII characters and password length $l$) but the utilized search space is very small. As a result, even with the small computing power $\alpha = 2^{40}$, the bin attacker (Figure 6a) can break nearly 60\% passwords in the breached databases\footnote{We did not get the count of passwords that belong to LinkedIn and Gamigo databases and consequently the success of the bin attacker with $\alpha = 2^{40}$ guesses is less than 60\% for both these databases.}. 
\subsection{Dense Bins vs Popular Bins}
In $Lemma \ 2$ we proved that the attacker benefits if the partitions are explored in decreasing order of density rather than in decreasing order of probability. Figure 6 shows the success rate of both these attack strategies for bin partitions. The attacker $A$ explores bins in decreasing order of density (Figure 6a) while the attacker $B$ explores bins in decreasing order of probability (Figure 6b). One can observe that even with the small computing power ($\alpha \le 2^{40}$) the attacker $A$ can gain huge success by exploring the denser bins. However, the expected success of the attacker $B$ who explores the most probable bins is insignificant. This is because the most probable bins are not necessarily the denser ones.  For instance, the bin $L_{8}$ with $2.46 \ million$ passwords is the third most popular bin and the bin $D_{6}$ with $2.28 \ million$ passwords is the fourth most popular bin in the Rockyou database. However, the density of $L_{8}$ bin is $\frac{2.46 \ million}{26^{8}} \approx 0.00001183$ while that of $D_{6}$ bin is $\frac{2.28 \ million}{10^{6}} \approx 2.28$. The attacker with the computing power say $\alpha = 2^{35}$ targeting the popular bins will exhaust all its resources without completely exploring the popular $L_{8}$ bin ($26^{8} \approx 2^{38}$ capacity) and recovering less than $2.46 \ million$ passwords. On the other hand the attacker targeting denser bins can recover $2.28 \ million$ passwords from $D_{6}$ bin and use the remaining $2^{35} - 2^{20} \approx 2^{35}$ computational power to explore the next denser bins. Therefore, the attacker exploring denser bins breaks more passwords by spending its resources wisely.
\\\\
{\em \textbf {Remark 1 :} In the absence of password composition policies, the resulting bin densities become highly non-uniform. As a result, the effort of the bin attacker is reduced drastically as a substantial fraction of the target password database can be recovered by exploring only the denser bins.}
\subsection{Composition Policy }
The composition rules forbid the use of bins composed of only one alphabet and enforce the use of bins composed of at least 2 or 3 alphabets. In this case, the search space is too large (at least $26 \cdot 26 \cdot 95^{l-2} > 95^{l-1}$) to carry out the brute-force search. However, merely increasing the search space does not imply that all the available bins are used uniformly. Surveys~\cite{Shay1,Anupam} and the analysis of breached databases suggest that  bins of the form $\{S,D\}_{p}U_{1}L_{n-2p-1}\{S,D\}_p$, where $p > 0$, that begin and end with digits or symbols and containing an uppercase letter followed by lowercase letters are more popular. To gain a better idea, we analysed the passwords composed of at least two alphabets from the breached databases. Analysis of these passwords provide more insights into bins that becomes highly dense as a result of enforcing different composition rules.
\begin{table}[h]
\tiny
\centering
\caption{Identifying potential denser bins used  in password creation when different alphabet sets are enforced upon users. The data (in percentage\%) indicates the popularity of denser bins for the given alphabet set. `-'  indicates the absence of corresponding alphabet set in the breached database.}~\label{tab:rules}
\begin{tabular}{|p{1.1cm}|p{2.2cm}|p{0.9cm}|p{0.9cm}|p{0.9cm}|p{0.9cm}|p{0.9cm}|p{0.9cm}|p{0.9cm}|p{0.9cm}|}%|p{0.5cm}|}
\hline
Set  & Dense Bins & Rockyou & Gamigo & LinkedIn & Gmail & Mail.ru & Yandex & Gawker & Yahoo \\ \hline %& Phpbb 
$\{U,L\}$	&    $U^{1}L^{+}$           &71.43 & 31.19 &49.67  &-   &24.53   &62.70  &67.76  &57.17  \\ %&53.49
$\{U,D\}$	&    $U^{+}D^{+}$          &82.61 & 57.94 &61.85  &-   &43.85   &46.17  &81.14  &72.10 \\  %&41.97
$\{L,D\}$	&    $L^{+}D^{+}$          &83.46  & 30.03 &63.92  &73.68  &46.07   &54.72  &58.36  &75.63  \\ %&66.58
$\{U,L,D\}$ &  $U^{1}L^{+}D^{+}$    &62.66  & 23.45 &41.66  &-   &28.30   &39.91  &42.92  &45.01  \\ %&27.32
$\{U,S\}$	&   $ U^{*}S^{1}U^{*}$  &69.07  & 43.63 &52.65  &-   &62.50   &67.15  &70.98  &57.14  \\ %&55.81
$\{L,S\}$	&    $L^{*}S^{1}L^{*}$	  &76.39 & 76.51 &73.36  &76.63  &83.30   &85.39  &85.96  &67.31  \\ %&75.14
$\{U,L,S\}$ &  $U^{[0,1]}L^{*}S^{1}U^{[0,1]}L^{*}$ &49.25   &61.62 &44.87   &53.66  &-   &53.35  &0.00   &36.09  \\ %&51.45
$\{D,S\}$	&   $ D^{*}S^{1}D^{*}$	   &45.43  & 27.00 &49.75  &59.46  &53.63   &50.93  &48.77  &40.00  \\ %&50.97
$\{U,D,S\}$ &  $U^{*}S^{1}U^{*}D^{+}$      &42.07 & 39.87  &39.50  &-   &41.79   &33.97  &55.76  &34.08  \\ %&40.38
$\{L,D,S\}$ &  $L^{*}S^{1}L^{*}D^{+}$       &43.13 & 46.57 &39.49  &42.02  &55.44  &47.26  &43.80  &34.39  \\ %&38.41
$\{U,L,D,S\}$& $U^{1}L^{+}D^{*}S^{1}D^{*}$     &30.85 & 38.36 &39.60  &-  &30.94  &31.26  &35.81  &25.56  \\%&28.26
\hline\end{tabular}
\end{table}

It is evident from Table~\ref{tab:rules} that enforcing composition rules can still result in the non-uniform bin densities. For instance, the analysis of passwords derived using the alphabet set $\{U,L,D\}$ revealed that most of these passwords (62.66\% in Rockyou) are created using bins of the form $U_{1}L_{+}D_{+}$ that begin with an uppercase letter followed  by lowercase letters and end with digits. If the bin $U_1L_6D_2$ becomes denser, then even the attacker with $\alpha = 2^{40}$ can explore it completely as the bin size is $26^7\cdot 10^2 < 2^{40}$. Thus, division of the search space into bins can be very useful to crack the complex passwords which emerge due to composition policies. 
\\\\
{\em \textbf {Remark 2 :} The passwords created in the presence of composition rules can still result in the non-uniform bin densities. For instance, if users are enforced to create alpha-numeric passwords, the analysis of breached databases suggests that the bins of the form $U_{1}L_{+}D_{+}$ could become more dense. This observation can be easily exploited by the bin attacker thus defeating the purpose of the enforced composition rule.}
%\newpage
\section{Countermeasure}
In section 2, we proved that the partition attackers gains maximum benefit if the partitions are explored in decreasing order of density.
Now, we show that partition attackers can be countered only if the partition densities are made uniform. Before deriving the exact formula for the minimum expected success of a partition attacker, we prove few lemmas.
\begin{Lemma}
Let $0\le f \le 1$ and
\begin{equation}
\frac{\phi_{1}}{|\sigma_{1}|} \ge \frac{\phi_{2}}{|\sigma_{2}|} \ge \ldots \ge\frac{\phi_{n}}{|\sigma_{n}|}
\end{equation}
then for $1 \le j < n$,
\begin{equation}
\frac{\sum\limits_{i=1}^{j}\phi_i + f\cdot \phi_{j+1}}{\sum\limits_{i=1}^{j}|\sigma_i| + f\cdot |\sigma_{j+1}|} \ge \frac{\sum\limits_{i=1}^{j+1}\phi_i }{\sum\limits_{i=1}^{j+1}|\sigma_i|} 
\end{equation}
\end{Lemma}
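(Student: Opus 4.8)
The plan is to read this as a mediant (Stern--Brocot) inequality. First I would introduce the shorthand $A = \sum_{i=1}^{j}\phi_i$, $B = \sum_{i=1}^{j}|\sigma_i|$, $c = \phi_{j+1}$ and $d = |\sigma_{j+1}|$, all strictly positive since every partition has size at least $1$. With this notation the claim becomes simply $\frac{A + fc}{B + fd} \ge \frac{A + c}{B + d}$.

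The first key step is to establish $\frac{A}{B} \ge \frac{c}{d}$, equivalently $Ad \ge Bc$. This follows exactly as in the proof of Lemma~2: from the density ordering $\frac{\phi_i}{|\sigma_i|} \ge \frac{\phi_{j+1}}{|\sigma_{j+1}|}$ for each $i \le j$ we get $\phi_i\,|\sigma_{j+1}| \ge \phi_{j+1}\,|\sigma_i|$, and summing over $i = 1,\dots,j$ yields $|\sigma_{j+1}|\sum_{i=1}^{j}\phi_i \ge \phi_{j+1}\sum_{i=1}^{j}|\sigma_i|$, i.e. $Ad \ge Bc$.

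The second step is a direct cross-multiplication. Because $B + fd > 0$ and $B + d > 0$, the target inequality is equivalent to $(A + fc)(B + d) \ge (A + c)(B + fd)$. Expanding both sides and cancelling the common terms $AB$ and $fcd$ leaves $Ad + fcB \ge Afd + cB$, which rearranges to $(1 - f)(Ad - Bc) \ge 0$. Now $0 \le f \le 1$ gives $1 - f \ge 0$, and the first step gives $Ad - Bc \ge 0$, so the product is nonnegative and we are done. As sanity checks, $f = 1$ recovers an equality and $f = 0$ reduces the claim to $\frac{A}{B} \ge \frac{A+c}{B+d}$, which is again the mediant consequence of $Ad \ge Bc$.

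I do not expect a real obstacle here; the only points requiring care are that all denominators are strictly positive (guaranteed by $|\sigma_i| \ge 1$) and that the density ordering is used in the direction that supplies $Ad \ge Bc$ rather than the reverse. The lemma will presumably be applied afterwards to show that the aggregate density of the first $j{+}1$ densest partitions, taken fractionally, is monotone in the fraction filled, which is the structural fact needed to pin down the minimum expected success of a partition attacker under uniform densities.
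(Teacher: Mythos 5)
Your proof is correct and is essentially the same argument as the paper's: both reduce the claim by cross-multiplication to the identity that the difference of cross-products equals $(1-f)\bigl(|\sigma_{j+1}|\sum_{i=1}^{j}\phi_i - \phi_{j+1}\sum_{i=1}^{j}|\sigma_i|\bigr)$, which is nonnegative by the density ordering and $f\le 1$. Your shorthand $A,B,c,d$ and the separate mediant-style step $Ad \ge Bc$ only repackage the paper's expansion more cleanly.
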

\begin{proof}
\begin{equation}
(\sum\limits_{i=1}^{j}\phi_i + f\cdot \phi_{j+1}) \cdot (\sum\limits_{i=1}^{j+1}|\sigma_i|) -  (\sum\limits_{i=1}^{j+1}\phi_i)  \cdot (\sum\limits_{i=1}^{j}|\sigma_i| + f\cdot |\sigma_{j+1}|)
\end{equation}
\begin{equation}
\begin{split}
= \sum\limits_{i=1}^{j}\phi_i  \cdot \sum\limits_{i=1}^{j}|\sigma_i| +  |\sigma_{j+1}| \cdot  \sum\limits_{i=1}^{j}\phi_i  + f\cdot \phi_{j+1} \cdot \sum\limits_{i=1}^{j+1}|\sigma_i| \\ - \sum\limits_{i=1}^{j}\phi_i  \cdot \sum\limits_{i=1}^{j}|\sigma_i| - \phi_{j+1} \cdot \sum\limits_{i=1}^{j}|\sigma_i| - f\cdot |\sigma_{j+1}| \cdot \sum\limits_{i=1}^{j+1}\phi_i
\end{split}
\end{equation}
\begin{equation}
\begin{split}
= |\sigma_{j+1}| \cdot  \sum\limits_{i=1}^{j}\phi_i  -  f\cdot |\sigma_{j+1}| \cdot \sum\limits_{i=1}^{j}\phi_i  -  f\cdot |\sigma_{j+1}| \cdot \phi_{j+1}\\ - \phi_{j+1} \cdot \sum\limits_{i=1}^{j}|\sigma_i| + f\cdot \phi_{j+1} \cdot \sum\limits_{i=1}^{j}|\sigma_i| + f \cdot |\sigma_{j+1}| \cdot \phi_{j+1}
\end{split}
\end{equation}
\begin{align}
&= (|\sigma_{j+1}| \cdot  \sum\limits_{i=1}^{j}\phi_i ) \cdot (1 -  f)    - (\phi_{j+1} \cdot \sum\limits_{i=1}^{j}|\sigma_i|) \cdot (1 - f)\\
&= ( \sum\limits_{i=1}^{j}|\sigma_{j+1}| \cdot \phi_i -  \sum\limits_{i=1}^{j}\phi_{j+1} \cdot |\sigma_i| ) \cdot (1 -  f)\\
&= (1 -  f) \cdot \sum\limits_{i=1}^{j}(|\sigma_{j+1}| \cdot \phi_i -  \phi_{j+1} \cdot |\sigma_i| ) \ge 0
\end{align}
since $\frac{\phi_{i}}{|\sigma_{i}|} \ge \frac{\phi_{j+1}}{|\sigma_{j+1}|}$ for $1\le i \le j$. Therefore,
\begin{equation}
\frac{\sum\limits_{i=1}^{j}\phi_i + f\cdot \phi_{j+1}}{\sum\limits_{i=1}^{j}|\sigma_i| + f\cdot |\sigma_{j+1}|} \ge \frac{\sum\limits_{i=1}^{j+1}\phi_i }{\sum\limits_{i=1}^{j+1}|\sigma_i|}
\end{equation}
\end{proof}

\newtheorem{Corollary}{Corollary}
\begin{Corollary}
For $1 \le j \le n-1$,
\begin{equation}
\frac{\sum\limits_{i=1}^{j}\phi_i}{\sum\limits_{i=1}^{j}|\sigma_i|} \ge \frac{\sum\limits_{i=1}^{j+1}\phi_i }{\sum\limits_{i=1}^{j+1}|\sigma_i|}
\end{equation}
\end{Corollary}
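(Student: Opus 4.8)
The plan is to observe that this corollary is simply the special case $f=0$ of Lemma~3. First I would recall that Lemma~3 asserts, under the density-ordering hypothesis $\frac{\phi_1}{|\sigma_1|}\ge\cdots\ge\frac{\phi_n}{|\sigma_n|}$ and for any $0\le f\le 1$ and any $1\le j<n$, the inequality
\begin{equation}
\frac{\sum\limits_{i=1}^{j}\phi_i + f\cdot \phi_{j+1}}{\sum\limits_{i=1}^{j}|\sigma_i| + f\cdot |\sigma_{j+1}|} \ge \frac{\sum\limits_{i=1}^{j+1}\phi_i }{\sum\limits_{i=1}^{j+1}|\sigma_i|}.
\end{equation}
Setting $f=0$ (which is an admissible value since $0\le 0\le 1$), the left-hand side collapses to $\frac{\sum_{i=1}^{j}\phi_i}{\sum_{i=1}^{j}|\sigma_i|}$, and the resulting inequality is exactly the statement of the corollary for $1\le j\le n-1$. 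So the whole proof is one line.

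If instead one wanted a self-contained argument mirroring the proof of Lemma~3, I would cross-multiply: it suffices to show $\bigl(\sum_{i=1}^{j}\phi_i\bigr)\bigl(\sum_{i=1}^{j+1}|\sigma_i|\bigr) - \bigl(\sum_{i=1}^{j+1}\phi_i\bigr)\bigl(\sum_{i=1}^{j}|\sigma_i|\bigr)\ge 0$. Expanding, the $\sum_{i=1}^{j}\phi_i\cdot\sum_{i=1}^{j}|\sigma_i|$ terms cancel, leaving $|\sigma_{j+1}|\sum_{i=1}^{j}\phi_i - \phi_{j+1}\sum_{i=1}^{j}|\sigma_i| = \sum_{i=1}^{j}\bigl(|\sigma_{j+1}|\phi_i - \phi_{j+1}|\sigma_i|\bigr)$, and each summand is nonnegative because $\frac{\phi_i}{|\sigma_i|}\ge\frac{\phi_{j+1}}{|\sigma_{j+1}|}$ for $1\le i\le j$ by the density ordering. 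Dividing back by the positive quantities $\sum_{i=1}^{j}|\sigma_i|$ and $\sum_{i=1}^{j+1}|\sigma_i|$ yields the claim.

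There is essentially no obstacle here: the corollary is a direct specialization of the preceding lemma, and the only thing to check is that $f=0$ lies in the allowed range, which it does. The statement is included presumably because this "decreasing running density" fact will be invoked repeatedly in the countermeasure section (e.g., to establish that the minimum expected success is attained precisely when all partition densities are equal), so it is worth isolating as a named corollary even though its proof is immediate.
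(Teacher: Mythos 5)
Your proposal is correct and matches the paper exactly: the paper's proof of this corollary is precisely the one-line observation that it follows from Lemma~3 by taking $f=0$. The additional self-contained cross-multiplication argument you sketch is just the $f=0$ instance of the paper's proof of Lemma~3 and is also sound.
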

\begin{proof}
The proof follows immediately by putting $f = 0$ in $Lemma \ 3$.
\end{proof}

\begin{Corollary}
For $1 \le j \le n$,
\begin{equation}
\frac{\sum\limits_{i=1}^{j}\phi_i}{\sum\limits_{i=1}^{j}|\sigma_i|} \ge \frac{\phi }{|\sigma|}
\end{equation}
where $\phi = \sum\limits_{i=1}^{n}\phi_i$ and $|\sigma| = \sum\limits_{i=1}^{n}|\sigma_i|$.
\end{Corollary}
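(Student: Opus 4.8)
The plan is to obtain Corollary 2 as an immediate consequence of Corollary 1 by a telescoping (monotonicity) argument, with no new computation. For $1 \le j \le n$ write $r_j := \frac{\sum_{i=1}^{j}\phi_i}{\sum_{i=1}^{j}|\sigma_i|}$ for the ``running density'' of the first $j$ partitions. Corollary 1 says exactly that $r_j \ge r_{j+1}$ for every $1 \le j \le n-1$, i.e. the finite sequence $r_1, r_2, \ldots, r_n$ is non-increasing.

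Next I would chain this down to the last index: for any fixed $j$ with $1 \le j \le n$ we have $r_j \ge r_{j+1} \ge \cdots \ge r_n$ (the chain being vacuous when $j=n$). By the definitions $\phi = \sum_{i=1}^{n}\phi_i$ and $|\sigma| = \sum_{i=1}^{n}|\sigma_i|$, the last term is $r_n = \frac{\sum_{i=1}^{n}\phi_i}{\sum_{i=1}^{n}|\sigma_i|} = \frac{\phi}{|\sigma|}$. Combining, $r_j \ge \frac{\phi}{|\sigma|}$ for all $1 \le j \le n$, which is precisely the asserted inequality; note that for $j=n$ it holds with equality, so the bound is tight.

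There is essentially no obstacle here: all the real work sits in Lemma 3 (hence in the cross-multiply-and-sum step that uses the sorted order $\frac{\phi_1}{|\sigma_1|} \ge \cdots \ge \frac{\phi_n}{|\sigma_n|}$), and Corollary 1 is just the $f=0$ specialization. If a self-contained proof were preferred over invoking Corollary 1, one could instead argue directly in the style of Lemma 2: from $\frac{\phi_i}{|\sigma_i|} \ge \frac{\phi_k}{|\sigma_k|}$ for all $i \le j < k$, cross-multiply to get $\phi_i|\sigma_k| \ge \phi_k|\sigma_i|$, sum over $i=1,\ldots,j$ and over $k=j+1,\ldots,n$ to obtain $\big(\sum_{k>j}|\sigma_k|\big)\big(\sum_{i\le j}\phi_i\big) \ge \big(\sum_{k>j}\phi_k\big)\big(\sum_{i\le j}|\sigma_i|\big)$, then add $\big(\sum_{i\le j}\phi_i\big)\big(\sum_{i\le j}|\sigma_i|\big)$ to both sides and factor to reach $\big(\sum_{i\le j}\phi_i\big)|\sigma| \ge \phi\big(\sum_{i\le j}|\sigma_i|\big)$, i.e. $r_j \ge \phi/|\sigma|$. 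I would present the telescoping version, as it is the shortest and reuses what is already proved.
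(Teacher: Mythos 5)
Your proof is correct and follows essentially the same route as the paper, which simply states that Corollary~2 ``follows immediately from Corollary~1''; your telescoping chain $r_j \ge r_{j+1} \ge \cdots \ge r_n = \phi/|\sigma|$ is exactly the intended way to make that immediate deduction explicit. The alternative direct cross-multiplication argument you sketch is also valid, but the telescoping version you chose to present is the paper's own approach.
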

\begin{proof}
The proof follows immediately from $Corollary \ 1$.
\end{proof}

\begin{Theorem}
The expected success of an attacker with limited computational power $\alpha < |\sigma|$ is minimum if the partition densities are uniform, {\em i.e.} $\frac{\phi_{1}}{|\sigma_{1}|} = \frac{\phi_{2}}{|\sigma_{2}|} = \ldots = \frac{\phi_{n}}{|\sigma_{n}|}$.
\end{Theorem}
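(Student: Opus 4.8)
The plan is to treat this as a statement about what the system designer can achieve by choosing how to spread the $\phi_i$ users over the \emph{fixed-size} partitions $\sigma_1,\dots,\sigma_n$: I would show that for \emph{every} such configuration the attacker's optimal expected success is at least $\frac{\phi}{|\sigma|}\cdot\alpha$, and then note that by $Lemma\ 1$ this lower bound is exactly the value attained when the densities are uniform, so uniform densities realise the minimum. Fix a configuration and, exactly as in $Theorem\ 1$, relabel partitions so that $\frac{\phi_1}{|\sigma_1|}\ge\cdots\ge\frac{\phi_n}{|\sigma_n|}$, and let $i_0$ be the index with $\sum_{i=1}^{i_0}|\sigma_i|\le\alpha<\sum_{i=1}^{i_0+1}|\sigma_i|$; this index exists and satisfies $i_0\le n-1$ because $\alpha<|\sigma|$. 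By $Theorem\ 1$, the attacker playing optimally achieves $E_{max}(success)=\sum_{i=1}^{i_0}\phi_i+\bigl(\alpha-\sum_{i=1}^{i_0}|\sigma_i|\bigr)\cdot\frac{\phi_{i_0+1}}{|\sigma_{i_0+1}|}$, so it suffices to prove $E_{max}(success)\ge\frac{\phi}{|\sigma|}\cdot\alpha$.

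The key algebraic step is to rewrite $E_{max}(success)$ in a ``fractional'' form that matches $Lemma\ 3$ verbatim. Setting $f=\dfrac{\alpha-\sum_{i=1}^{i_0}|\sigma_i|}{|\sigma_{i_0+1}|}\in[0,1)$, we get $\alpha=\sum_{i=1}^{i_0}|\sigma_i|+f\cdot|\sigma_{i_0+1}|$ and $E_{max}(success)=\sum_{i=1}^{i_0}\phi_i+f\cdot\phi_{i_0+1}$, so the desired inequality becomes $\dfrac{\sum_{i=1}^{i_0}\phi_i+f\cdot\phi_{i_0+1}}{\sum_{i=1}^{i_0}|\sigma_i|+f\cdot|\sigma_{i_0+1}|}\ge\dfrac{\phi}{|\sigma|}$. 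For $i_0\ge 1$ this follows by chaining the earlier results: $Lemma\ 3$ with $j=i_0$ bounds the left-hand side below by $\frac{\sum_{i=1}^{i_0+1}\phi_i}{\sum_{i=1}^{i_0+1}|\sigma_i|}$, and $Corollary\ 2$ with $j=i_0+1$ bounds that below by $\frac{\phi}{|\sigma|}$. In the degenerate case $i_0=0$, the left-hand side reduces to $\frac{\phi_1}{|\sigma_1|}$, which already dominates $\frac{\phi}{|\sigma|}$ by $Corollary\ 2$ with $j=1$. Either way $E_{max}(success)\ge\frac{\phi}{|\sigma|}\cdot\alpha$, and $Lemma\ 1$ gives $E_{uniform}(success)=\frac{\phi}{|\sigma|}\cdot\alpha$, completing the argument.

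I expect the main difficulty to be bookkeeping rather than substance: one has to (i) massage $E_{max}(success)$ into the exact $\bigl(\sum\phi_i+f\phi_{i_0+1}\bigr)/\bigl(\sum|\sigma_i|+f|\sigma_{i_0+1}|\bigr)$ shape so that $Lemma\ 3$ applies without modification, and (ii) separately handle the boundary cases $f=0$ (the budget exactly exhausts a prefix of partitions) and $i_0=0$ (not even the densest partition is fully explored), since $Lemma\ 3$ is stated only for $1\le j<n$. It is also worth one sentence to stress that the bound applies to the attacker's \emph{optimal} strategy, so it really characterises what the designer can and cannot prevent; tracking when equality holds in $Lemma\ 3$ and $Corollary\ 2$ shows the densities must be equal on the partitions the attack touches, and since $\alpha$ may be varied, this pins down uniform densities as the unique minimiser.
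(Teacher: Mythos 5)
Your proposal is correct and follows essentially the same route as the paper's own proof: invoke $Theorem\ 1$ for the attacker's optimum, rewrite it with the fractional parameter $f$, and chain $Lemma\ 3$ with $Corollary\ 2$ to lower-bound it by $\frac{\phi}{|\sigma|}\cdot\alpha$, which $Lemma\ 1$ identifies as the uniform-density value. Your explicit treatment of the boundary cases $i_0=0$ and $f=0$ is a small added care the paper glosses over, but the argument is otherwise identical.
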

\begin{proof}
 By $Lemma \ 1$, if the partition densities are uniform then the expected success of the attacker with computational power $\alpha < |\sigma|$ is,
\begin{equation}\label{eq:uni}
E_{uniform}(success) =  \frac{\phi}{|\sigma|} \cdot \alpha
\end{equation}
Now, consider the case of non-uniform partition densities. Without loss of generality assume that,
\begin{equation}
\frac{\phi^{'}_{1}}{|\sigma_{1}|} \ge \frac{\phi^{'}_{2}}{|\sigma_{2}|} \ge \ldots \ge \frac{\phi^{'}_{n}}{|\sigma_{n}|}
\end{equation}
where $\sum\limits_{i=1}^{n}\phi^{'}_i = \phi$.
\\
By $Theorem \ 1$, the maximum success for the attacker with computational power $\alpha < |\sigma|$ is,
\begin{equation}
E_{max}(success) =  \sum\limits_{i=1}^{i_{0}} \phi^{'}_{i}  + (\alpha -  \sum\limits_{i=1}^{i_{0}} |\sigma_{i}|) \cdot \frac {\phi^{'}_{i_0+1}}{|\sigma_{i_0+1}|}
\end{equation}
where $i_{0}$ is such that $ \sum\limits_{i=1}^{i_{0}}{|\sigma_{i}|} \le \alpha < \sum\limits_{i=1}^{i_{0}+1}{|\sigma_{i}|}$.
\\
Let $f = \frac{(\alpha -  \sum\limits_{i=1}^{i_{0}}|\sigma_{i}|)}{|\sigma_{i_0+1}|}$. Note that $0 \le f \le 1$.
\begin{equation}
E_{max}(success) =  \sum\limits_{i=1}^{i_{0}} \phi^{'}_{i}  + f \cdot \phi^{'}_{i_0+1}
\end{equation}
Using $Lemma \ 3$ we get,
\begin{equation}
E_{max}(success) \ge \frac{\sum\limits_{i=1}^{i_{0}+1} \phi^{'}_{i}}{\sum\limits_{i=1}^{i_{0}+1} \sigma_{i}}\cdot (\sum\limits_{i=1}^{i_{0}} \sigma_{i}  + f \cdot \sigma_{i_0+1})
\end{equation}
since $\sum\limits_{i=1}^{i_{0}} \sigma_{i}  + f \cdot \sigma_{i_0+1} = \alpha$ we have,
\begin{equation}
E_{max}(success) \ge \frac{\sum\limits_{i=1}^{i_{0}+1} \phi^{'}_{i}}{\sum\limits_{i=1}^{i_{0}+1} \sigma_{i}}\cdot \alpha
\end{equation}
By $Corollary \ 2$, we know that $\frac{\sum\limits_{i=1}^{i_{0}+1} \phi^{'}_{i}}{\sum\limits_{i=1}^{i_{0}+1}} \ge \frac{\phi}{|\sigma|}$
\begin{equation}
E_{max}(success) \ge \frac{\phi}{|\sigma|} \cdot \alpha = E_{uniform}(success)
\end{equation}
\end{proof}
\noindent
This result is more general and can be used to counter different instances of a partition attacker. However, we concern ourselves with the bin partitioning instance and in the light of $Theorem\ 2$, we propose a {\em bin explorer system} to counter the {\em bin attacker}.
\section{Bin Explorer System}
As demonstrated earlier, the bin attacker exploits the prevalence of non-uniform bin densities in the real-world password data and targets the denser bins to recover a major fraction of the password database. By $Theorem \ 2$, the success of the bin attacker can be minimized by making all bins equally dense. Now, we use this result to propose a new scheme to counter the bin attacker. For this purpose, consider the generation of a system assigned random password. It can be viewed as a sequence of following 3 steps.
\begin{enumerate}
\item Select the length $l$, {\em e.g.,} $l = 10$.
\item Randomly select the $l$ length bin $\beta$ from the collection of $n$ bins, {\em e.g.,} $\beta = L^4D^2L^2S^2$.
\item Randomly select a word in the bin $\beta$, {\em e.g.,} kebz93ga-?
\end{enumerate}
In such scheme, users have no control over the creation of their own passwords and the resulting passwords are also difficult to remember~\cite{Venkat,ShayPhrase}. To counter the bin attacker, we need to ensure only the uniform bin densities which can be achieved even if the system decides upon the first two steps of the random password creation process. Thus, the system can randomly assign the $l$ length bin $\beta$ to the user and allow the user to select a password from this assigned bin $\beta$. Therefore, we can mitigate the threat of the bin attacker and still provide the users with some control over their password creation. We refer to such system as {\em bin explorer}. 

From equation (\ref{eq:uni}) we observe that, as the number of users $\phi$ in the system increases, the expected success $E_{uniform}(success)$ of the attacker also increases. Therefore, just spreading users across different bins is not enough. The system should also ensure that the expected success $E_{uniform}(success)$ of the attacker is bounded. This can be achieved by increasing the search space size $|\sigma|$.
\subsection{Determining the Minimum Password Length}
 Again consider the settings where the maximum password length $l_{max}$ is 10, the attack search space is $|\sigma| \approx 2^{65.7}$, the computing power of bin attacker is $\alpha=2^{56}$ but the number of users $\phi = 2^{30}$. Assuming uniform  password bin densities, by equation (\ref{eq:uni}), the bin attacker can now compromise at least $E_{uniform}(success) = 2^{20.3} \approx 1.3\ million$ user accounts. Earlier the attack on a system with $\phi = 2^{25}$ users could recover only 40,000 passwords (equation (\ref{eq:32m})) but now the same attack when mounted on a system with $\phi = 2^{30}$ users looks more dangerous. The expected success $E_{uniform}(success)$ of the bin attacker is directly proportional to the number of users $\phi$ in the system. More the number of users in the system, more the expected success of the bin attacker. Thus, the size of the search space $|\sigma|$ should also depend upon the number of users $\phi$ in the system. Rearranging the terms in the equation (\ref{eq:uni}) we get,
\begin{equation}
|\sigma| = \frac{\phi \cdot \alpha}{E_{uniform}(success)}
\end{equation}
Using (\ref{eq:ss}) we get,
\begin{equation}
|\gamma|^l = \frac{\phi \cdot \alpha}{E_{uniform}(success)}
\end{equation}
where, $\gamma$ is the alphabet set from which the password is derived and $l$ is the password length. Using this relation, one can precisely compute the minimum password length $l_{min}$ as, 
\begin{equation}\label{eq:lmin}
l_{min} = \frac{log(\phi)+log(\alpha)-log(E_{uniform}(Success))}{log(|\gamma|)}
\end{equation}
Now, we can use this result to decide the minimum password length and thus restrict the bin attacker with the computational power $\alpha$ to system-desired success $E_{uniform}(success)$.
\begin{figure}[h]
\centering
\includegraphics[scale=0.7]{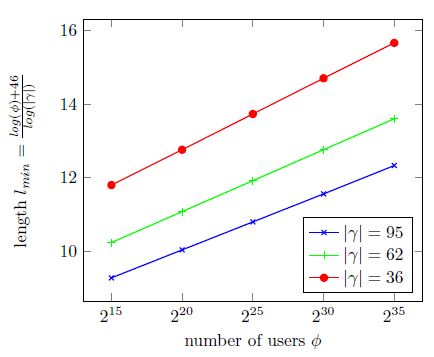}
\caption{Determining minimum password length $l_{min}$ for different values of $\phi$ and alphabet set $\gamma$. The parameters depicting attackers guessing capability $\alpha=2^{56}$ and expected success $E_{uniform}(success) = 2^{10}$ are fixed.}~\label{fig:binlen}
\end{figure}

Figure~\ref{fig:binlen} depicts the minimum password length $l_{min}$ required to bound the expected success $E_{uniform}(success) = 2^{10}$ of the bin attacker with the computational resource $\alpha=2^{56}$. The password length is determined for different systems which vary in both the number of users and the alphabet sets used for the bin generation. For instance, if the number of users $\phi = 2^{25} = 32 \ million$ and the alphabet size $|\gamma| = |L|+|U|+|D|+|S| = 95$, then the minimum password length should be 11. The number of available bins in this case is $n\approx 4^{11}$ and the density of every bin is $\frac{\phi}{|\sigma|} = 2^{25}/95^{11} = 2^{25}/2^{72.27} \approx 2^{-47.3}$. Hence, because of the uniform bin densities, the effort of the offline attacker is increased exponentially.
\subsection{User-Bin Assignments}
To protect against offline attacks, websites enforce password composition policy on users. However, the enforced policy is same for all users of a given website which can again result in a fewer denser bins and a large number of empty bins. To prevent the underutilization of a search space, the system should play an active role in distributing users across different bins. A few distribution strategies are as follows (comparison in Table~\ref{tab:cmpstr}).
\\\\
\textbf{Round Robin.} In this strategy, the system assigns a bin to a new user in a round-robin fashion. For this purpose, the system maintains an array of n bins along with an index $b < n$ which points to the next available bin in the array. After the arrival of a new user, the next available bin $\beta_b$ is allocated and the value of the index is increased to $(b+1)\ mod\ n$. However, as a result of this strategy, bins of various sizes have the same number of passwords $\frac{\phi}{n}$ which leads to non-uniform bin densities. Further, assigning bins in a fixed order opens up an attack wherein the attacker who gets access to the system and can infer the order in which users created accounts now knows the structure of a particular user's password. The attacker can therefore mount targeted attacks on a particular user account, e.g., admin. The space required for maintaining the data-structure is proportional to the total number of bins n and the time required to allocate a bin is $O(1)$.
\\\\
\textbf{Density-ordered.} In this allocation strategy, the system strives to make every bin equally dense by assigning the least dense bin to a new user. To achieve this, the system stores all bins in a list in decreasing order of density. When a new user arrives, the system selects the least dense bin $\beta_{min}$ from the end of this sorted list. If there are multiple least dense bins, the system chooses a bin uniformly at random, thereby increasing the effort of targeted attacks, e.g., if the density of all bins composed of 6 lowercase letters and 2 digits is equal, then the system chooses one of these ${8 \choose 2}$  bins uniformly at random. Now, in order to succeed the targeted attack requires searching all ${8 \choose 2}$  bins.

As the list is already sorted, one can use binary search technique to identify a region within the list containing bins with least density. Subsequently, the system randomly picks a bin $\beta_{min}$ from this region and allocates it to the user. Finally, the density of the allocated bin $\beta_{min}$ is updated and its position in the list is rectified using the binary search algorithm. The space required for maintaining the sorted list is proportional to the total number of bins $n$ while the time require to allocate the least dense bin is $O(logn)$.
\\\\
\textbf{Random.} In this strategy, the system randomly generates a bin for every user. The probability of assigning the bin should be proportional to its size. Otherwise the bins of different capacities will have the same number of passwords, disturbing the density uniformity. Thus, the expected number of passwords in any bin $i$ is $\phi_i = \phi \cdot \frac{ |\sigma_i| }{|\sigma|}$. Therefore, the expected density of a bin is $\frac{\phi}{|\sigma|}$. Since the bins are assigned randomly some bins can get $\frac{\phi}{|\sigma|} \cdot x - \frac{\phi}{|\sigma|}$(where $x \ge 1$) more users than the expected. We call the quantity $x$ as stretch. For the random assignment strategy the stretch $x$ is not more than $\frac{logn}{loglogn}$ with high probability~\cite{Gonnet}. Since $logn = log(4^{l_{min}}) = 2\cdot l_{min}$, $x = \frac{2\cdot l_{min}}{log(2\cdot l_{min})}$.
In this strategy, there is no need to keep the track of number of users in each bins. Also no pointers are required. The bins are generated randomly upon the arrival of the new user and therefore, the required space and time is just $O(1)$. Further, random bin assignment makes targeted attacks on a particular user arduous. Therefore, the random approach is better among all proposed approaches.
\\\\
\textbf{Power of Two Choices.} In this strategy, the system randomly chooses two bins and assigns the least dense bin to the new user. Again the probability of choosing any bin should be proportional to its size. Otherwise the bins of different capacities will have same number of passwords, disturbing the density uniformity. Such implementation would require the storage to remember the number of users in every bin $\sigma_i$. Therefore, $O(n)$ space is required. The expected density of a bin is $\frac{\phi}{|\sigma|}$. However, the stretch in this strategy depend upon the size of a bin~\cite{nonuniform}. In case of smaller bins~\cite{nonuniform} the stretch is $2 \cdot loglogn = 2 \cdot log(2 \cdot l_{min})$ and for the larger bins~\cite{nonuniform} the stretch is $4 + \varepsilon$ with high probability.
In case of database compromise, the attacker can learn the exact number of users in any given bin.
\begin{table}[h]
\scriptsize
\centering
\caption{Comparison of different bin distribution strategies.}~\label{tab:cmpstr}
\begin{tabular}{|p{2cm}|p{0.7cm}|p{1cm}|p{1.2cm}|p{2cm}|} \hline
Strategy& Space & Time & Expected density & Stretch $x$\\ \hline
Round Robin	& $O(n)$ & $O(1)$ & $\frac{\phi}{n\cdot |\sigma_i|}$ & $\frac{\phi}{n\cdot |\sigma_i|} - \frac{\phi}{|\sigma|}$\\ \hline
Density-ordered & $O(n)$ & O(logn) & $\frac{\phi}{|\sigma|}$ & $1+\varepsilon$\\ \hline
Random         & $O(1)$  & $O(1)$ & $\frac{\phi}{|\sigma|}$ & $\frac{2\cdot l_{min}}{log(2\cdot l_{min})}$\\ \hline
Power of Two Choices & $O(n)$ & $O(1)$  & $\frac{\phi}{|\sigma|}$ & $2 \cdot log(2\cdot l_{min})$ for smaller bins and $4+\varepsilon$ for larger bins\\ \hline
\end{tabular}
\end{table}
\subsection{User-Bin Adaptation}
Various techniques have been proposed in the past to help users create secure and memorable passwords~\cite{Yan,Jeya,Sonia,BonneauSecret}. Now, we show how these existing schemes can be easily adapted for helping users in creating passwords from the system assigned bin.
\\\\
\textbf{Non-Iterative Scheme.} This scheme directly asks users to create a password according to a system assigned bin in one step.
\begin{enumerate}
\item System chooses a random bin $\beta$ of length $l \ge l_{min}$ from the collection of predefined bins, {\em e.g.,} $\beta = U^{1}S^{1}L^{1}D^{2}S^{1}U^{1}L^{2}$
\item User creates a password according to the system assigned bin, {\em e.g.,} \\ ``D@c45\&Mac".
\end{enumerate}
To help users, one can also modify the system proposed in~\cite{Jeya} to create mnemonic passphrases corresponding to the system assigned bin, {\em e.g.,} for bin \\$\beta = U^{1}S^{1}L^{1}D^{2}S^{1}U^{1}L^{2}$ the mnemonic passphrase can be ``It's 12 noon I am hungry'' and therefore the password is ``I's12\&Iah''.
\\\\
\textbf{Iterative Scheme.} In 2009, Forget {\em et al.}~\cite{Sonia} showed that the security of the text passwords can be improved either by inserting random characters in the user chosen password or by randomly replacing the characters in the user chosen password. While, Bonneau {\em et al.}~\cite{BonneauSecret} proposed an incremental approach to imprint 56-bit secret into human memory. Both these methods can be combined to help users to choose a secret from the assigned bin. The scheme is as follows,
\begin{enumerate}
\item User chooses a lowercase password of length $l \ge l_{min}$ subject to a blacklist of say top 10,000 popular passwords.
\item System chooses a random bin $\beta$ of length $l_{min}$ from the collection of predefined bins. 
\item System calculates the hamming distance $hd$ between the user chosen bin $L^{l_{min}}$ and system assigned bin $\beta$.
\item After every $x$ successful logins, the system attempts to minimize the hamming distance $hd$ by 1. User should replace the lowercase letter in the password by a letter corresponding to the bin $\beta$.
\item After $hd\cdot x$ number of successful logins, the password from bin $L^{l_{min}}$ is transformed to system assigned random bin $\beta$.
\end{enumerate}
The system proposed in~\cite{BonneauSecret} requires the assigned 56-bit secret to be stored in plaintext until user learns it. But in our scheme, the password is stored securely at every stage. The system only has to store the assigned bin in plaintext until the password is created as per this assigned bin. Once the final password is set, this bin information is also removed from the system.
\\\\
\textbf{Password Manager.} Password manager is one of the convenient ways for securely managing passwords. It generates, stores and recalls passwords on behalf of the users. Password managers such as~\cite{Bruce,Dominik} generate passwords based upon the password policy. However, password managers generate such passwords independently in isolation without communicating with the server. Such strategies can lead to maximum load of $logn/loglogn$ on few bins with high probability~\cite{Gonnet}. Uniform bin densities can be ensured if password managers generate passwords using the system assigned bin. The password manager's logic can be easily tweaked to create passwords according to the assigned bins.

Recent studies~\cite{Emperor,Autofill} revealed serious vulnerabilities in the implementation of different password managers. We believe that security conscious users can still use password managers to store the partial information instead of storing the entire password of critical accounts. This partial information can be used to recall the entire password whenever required. For instance, in the bin explorer schemes one can use password managers to store the password bin information as a hint and use it to recall their actual password during login.

There are also studies which indicate that the presence of strength meters can result in stronger passwords~\cite{Egelman,Shay4}. We can use this strategy in influencing user behavior towards rarer bins. When users choose any of the denser bins the strength meter can warn them about the risks of offline attacks.

\section{Experiments}
We studied the usability of both non-iterative and iterative schemes described in the previous section with 33 users. Both these studies were conducted with the same 33 users in a laboratory in the presence of an experimenter. All users were graduate working professionals and regular internet users. 10 users (30.3\%) were female and 23 users (69.7\%) were male belonging to the same nationality. 

We studied the usability of a non-iterative scheme followed by that of a iterative scheme. In both studies, we used a set of 10 length bins composed of exactly 6 lowercase letters and remaining 4 letters were derived using any 3 alphabets $U,D,S$ $e.g.,$ $LLDLDLUULL$, $LLLSLULDDL$ and so on. The number of 10 length bins composed of exactly 6 $L$ and any 3 alphabets $U,D,S$ in the remaining 4 positions is ${10 \choose 4} \cdot 3^{4} = 17010$. These bins were allocated using random distribution strategy and therefore each user is assigned a unique bin with a very high probability (nearly 0.97). We asked users to use the assigned 10 length bin to create a minimum 10 length password. The first 10 letters of the password had to follow the assigned bin pattern and the remaining letters of the password had no restriction. Now, we describe our experiments in detail.
\subsection{Non-Iterative Scheme}
The non-iterative experiment was conducted in two phases. We refer to the first phase as Password Creation phase and the second phase as Password Recall phase.
\begin{figure}[h]
\centering
\includegraphics[scale=0.6]{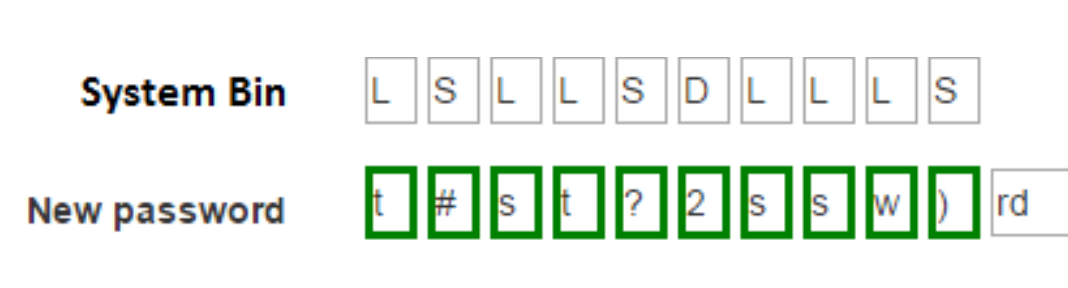}
\caption{Creating password using system assigned bin $LSLLSDLLLS$ in the non-iterative experiment.}~\label{fig:noniter}
\end{figure}
\begin{enumerate}
\item \textbf{Password Creation phase.} In this phase, users were asked to create a minimum 10 length password using the system assigned bin displayed on the password creation page (Figure~\ref{fig:noniter}). After retyping password, users were directed to the login page, where they had to enter their password again. 
\item \textbf{Password Recall phase.} After 72 hours, we invited users for the recall phase. There were no practice sessions between these two stages. This setup helped us to measure the recall of the password, when password has not been used for a while. If users were not able to recall their password within at most 5 attempts, we showed them their bin which was used during the password creation. After displaying the bin, 3 more attempts were allowed. If the users were not able to recall their password even after displaying the bin, we showed them their password. Finally, we asked users to fill a short survey. The questions mostly captured the user sentiment and password storage behavior. These questions and responses are listed in~\cite{trddc}. 
\end{enumerate}
Now, we investigate the memorability and efficiency of the non-iterative scheme. 
\\\\
\textbf{Memorability.} We measure memorability in terms of: 
\begin{enumerate}
\item Number of users who successfully recalled their passwords during the password recall phase.
\item Average number of login attempts required for the successful recall. 
\end{enumerate}
For convenience, we classify users into 3 categories, namely {\em password storage}, {\em bin storage} and {\em no storage}. 3 users (9.10\%) who reported writing down their password belong to the {\em password storage} category, 4 users (12.12\%) reported writing down their bin belong to the {\em bin storage} category and remaining 26 users (78.78\%) belong to the {\em no storage} category. All users who wrote down their password and 3 users who wrote down their bin recalled their password successfully. 26 users who did not report storing their passwords or bins were further split into two categories, namely {\em no hint} and {\em bin hint}. 11 users from the {\em no storage} category who successfully recalled their passwords without requiring any hint belong to the {\em no hint} category. The remaining 15 users failed in the password recall and we helped them by displaying their corresponding bin which was used during the password creation. These 15 users belong to the {\em bin hint} category. Upon viewing the bin, 10 more users succeeded in recalling their passwords. Thus, overall 21 users from the {\em no storage} category recalled their passwords and 5 users forgot their password. The results are summarized in Table~\ref{tab:recall}.
\begin{table}[h]
\centering
\scriptsize
\caption{Data of users who successfully recalled their passwords in non-iterative study.}~\label{tab:recall}
\begin{tabular}{|p{2.5cm}|p{1cm}|p{1.5cm}|p{1.3cm}|p{1.3cm}|} \hline
Category & Total Users & Successful Recalls & Average Login Attempts & Average Login Time\\  \hline
Password Storage	& 3 & 3 (100\%) & 1.00 & 21.22s\\
Bin Storage          & 4  & 3 (75\%) & 1.33 & 35.52s\\ 
No Storage-No Hint        & 11 & 11 (100\%) & 1.72 & 37.94s\\ 
No Storage-Bin Hint       & 15 & 10 (66.67\%) & 3.73 & 77.20s\\ 
\hline
\end{tabular}
\end{table}

The average number of login attempts required to successfully login during the recall phase is also shown in Table~\ref{tab:recall}. The users who successfully recalled their passwords during the recall phase with or without {\em bin hint} are referred to as {\em successful users}. Users who wrote down their bins required 1.33 login attempts while users in {\em no hint category} required 1.72 attempts on an average. 
%Since users in the {\em bin hint} category were shown bin only after few failed login attempts (at most 5), their average login attempt is much higher (3.73 attempts). 
The users in the {\em bin hint} category made 2.53 attempts on an average before clicking on {\em Forgot Password} button. After viewing their bin, users required 1.20 more attempts to succeed. Therefore, average login attempts of users in {\em bin hint} category is much higher (3.73 attempts). 
\\\\
\textbf{Efficiency.} We measure efficiency in terms of:
\begin{enumerate}	
\item Average password creation time required during the creation phase
\item Average login time required during the recall phase.
\end{enumerate}
Password creation required nearly 139 seconds on an average. The average login time required for the successful recall of password during the recall phase is shown in Table~\ref{tab:recall}. The time required for recalling passwords in {\em password storage}, {\em bin storage} and {\em no hint} category is less compared to the time required for recalling passwords in {\em bin hint} category. This is because the average login attempts required for the successful {\em bin hint} category users is much higher (3.73 attempts).
\\
\\
\textbf{User Sentiment.} The memorability and efficiency results of the non-iterative experiment indicate that passwords created using randomly assigned bins are both difficult to remember and to enter. 
\begin{table}[h]
\centering
\scriptsize
\caption{User sentiment in non-iterative experiment.}~\label{tab:response}
\begin{tabular}{|p{2.0cm}|p{2.0cm}|p{2.0cm}|p{2.0cm}|} \hline
Likert Scale & Creating password was easy? & Remembering password was difficult? & Created password was more secure?\\ \hline
StronglyAgree	& 0 (0\%) & 11 (33.33\%) & 11 (33.33\%)\\
Agree         & 11 (33.33\%)  &  14 (42.42\%) &14 (42.42\%)\\ \hline
Neutral          & 8 (24.24\%)& 5 (15.15\%) &4 (12.12\%)\\ \hline
Disagree &8 (24.24\%) &2 (6.06 \%) &3 (9.09\%))\\
StronglyDisagree & 6 (18.18\%) &1 (3.03\%) &1 (3.03\%)\\
\hline
\end{tabular}
\end{table}
This can also be observed from the survey responses that we received from the users (Table~\ref{tab:response}). Since random bins require users to place digits, symbols and uppercase letters at 4 random positions, the resulting passwords are difficult to remember and to enter. However, most users (75.75\%) felt that resulting passwords are more secure than the passwords that they usually create.
\subsection{Iterative Scheme}
To help users in remembering digits, symbols and uppercase letters at random positions, we designed an iterative experiment. We performed this experiment with the same 33 users in 2 phases.
\begin{figure}[h]
\centering
\includegraphics[scale=0.6]{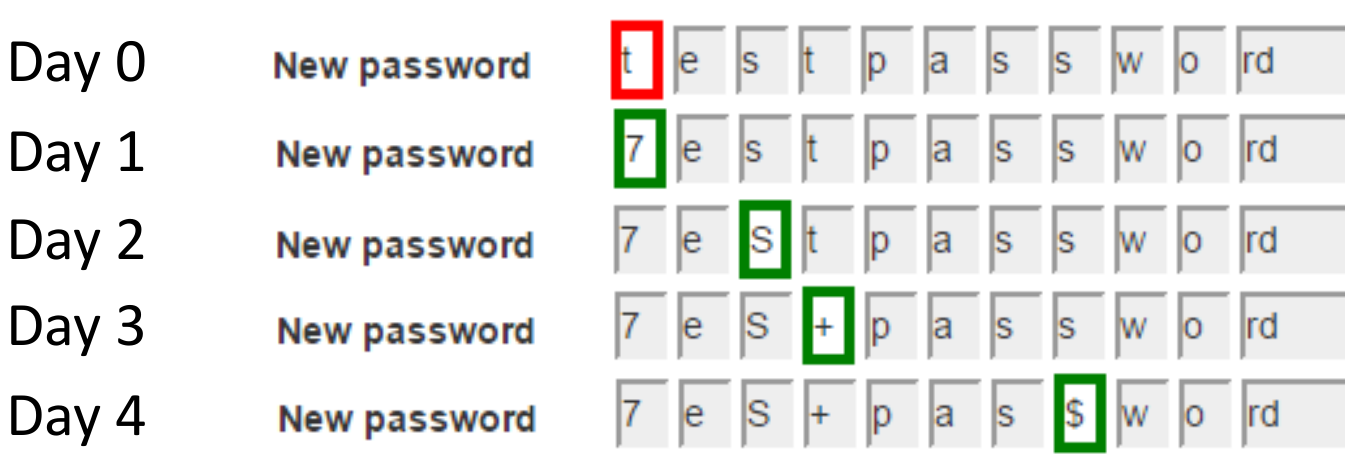}
\caption{Evolution of a simple password ``testpassword" to a complex password ``7eS+pas\$word" using system assigned bin ``DLUSLLLSLL". On Day1, system suggests replacement of letter `t' in highlighted red box with any digit $D$. This process continues for next 3 days until first 10 letters of ``testpassword" are aligned with 10 length bin ``DLUSLLLSLL".}~\label{fig:iter}
\end{figure}
\begin{enumerate}
\item \textbf{Password Creation phase. }This phase was split into 5 stages and required 5 days for completion. The stages were numbered from 0 to 4. In the $0^{th}$ stage, users were asked to create a minimum 10 length lowercase password ($L^{10}$). Further, the system generated a random 10 length bin for every user and stored it in the password database. This bin is not displayed to the user. In the next 4 stages, system guided users in transforming the first 10 letters of their password from $L^{10}$ bin to the system assigned bin. To complete the $i^{th}$ stage users were asked to login using the password created in the $(i-1)^{th}$ stage. Upon successful login, the system minimizes the hamming distance between the current bin and system assigned bin by suggesting a letter replacement in the current password with an appropriate alphabet from the set $\{U,L,D,S\}$. The user can choose any letter from the suggested alphabet and replace the letter of a current password as suggested by the system. Thus, after completing the $4^{th}$ stage, the final password of user differed from the password created in $0^{th}$ stage in 4 random positions and was according to the assigned bin (Figure~\ref{fig:iter}).
\item \textbf{Password Recall phase.} The password recall phase in the iterative experiment is same as defined in the non-iterative experiment.
\end{enumerate}
Now, we investigate the usability of the iterative scheme.
\\
\\
\textbf{Memorability.} Since users successfully recalled their passwords during different iterations of creation phase, we report the memorability results of only recall phase. In the iterative study, we classify users into 2 categories, {\em password storage} and {\em no storage}. We do not have bin storage category since no one reported writing their bin. The reason is that the bin was never displayed to the users directly during the password creation. 5 users (15.16\%) who reported writing their password belong to the {\em password storage} category and the remaining 28 users (84.84\%) belong to the {\em no storage} category. All users who wrote their passwords, successfully logged into the system. Moreover, 26 users in {\em no storage} category successfully recalled their password without any hint ({\em no hint}). Thus, the iterative system resulted in huge improvement compared to the previous non-iterative system. We helped 2 unsuccessful users in their password recall by showing them their corresponding bin which was used in the password creation. Upon viewing the bin, 1 more user succeeded in recalling the password. Thus, overall 27 users from {\em no storage} category recalled their passwords and only 1 user forgot the password. 
\begin{table}[h]
\centering
\scriptsize
\caption{Data of users who successfully recalled their passwords in iterative study.}~\label{tab:iterrecall}
\begin{tabular}{|p{2.5cm}|p{1cm}|p{1.5cm}|p{1.3cm}|p{1.3cm}|} \hline
Category & Total Users & Successful Recalls & Average Login Attempts & Average Login Time\\  \hline
Password Storage	& 5 & 5 (100\%) & 1 & 25.19s\\
No Storage-No Hint         & 26 & 26 (100\%) & 1.65 & 35.05s\\
No Storage-Bin Hint         & 2 & 1 (50\%) & 3 & 59.21s\\
\hline
\end{tabular}
\end{table}

The average login attempts required for successful login during the recall phase is shown in Table~\ref{tab:iterrecall}. Users who wrote down their passwords or bins recalled their passwords in just 1 attempt. Users in {\em no hint} category required just 1.65 login attempts on an average during the recall phase. Since users in the {\em bin hint} category were shown bin after few login attempts (at most 5), therefore their average login attempt is higher (3).
\\
\\
\textbf{Efficiency.} Password creation required nearly 182.3 seconds on an average. This creation time includes the time required for creating a lowercase password in the $0^{th}$ iteration and the time to replace a letter with an appropriate alphabet in the subsequent 4 iterations. The average login time required for the successful recall of password during the recall phase is shown in Table~\ref{tab:iterrecall}. The amount of time required for recalling passwords in the {\em password storage} and {\em no hint} category is less compared to the amount of time required for recalling passwords in the {\em bin hint} category.
\\
\\
\textbf{User Sentiment.} The memorability and efficiency results of the iterative study indicates that the passwords created using randomly assigned bin can be remembered. 
\begin{table}[h]
\centering
\scriptsize
\caption{User sentiment in iterative experiment.}~\label{tab:iterresponse}
\begin{tabular}{|p{2.0cm}|p{2.0cm}|p{2.0cm}|p{2.0cm}|} \hline
Likert Scale & Creating password was easy? & Remembering password was difficult? & Created password was more secure?\\ \hline
StronglyAgree	& 7 (21.21\%) & 4 (12.12\%) & 10 (30.30\%)\\
Agree                      & 13 (39.39\%)  &  13 (39.39\%) &18 (54.54\%)\\ \hline
Neutral                    & 9 (27.27\%) & 10 (30.30\%) &4 (12.12\%)\\ \hline
Disagree                 & 3 (9.09\%) &6 (18.18\%) &1 (3.03\%)\\
StronglyDisagree   & 1 (3.03\%) &0 (0\%) & 0 (0\%)\\
\hline
\end{tabular}
\end{table}
The survey responses also indicate that creating password using iterative method is much easier compared to non-iterative method (Table~\ref{tab:iterresponse}).
\subsection{Non-Iterative vs Iterative Scheme}
We highlight the prominent differences in the usability results of iterative and non-iterative schemes. 
\begin{enumerate}
\item \textbf{Creation time.} The average password creation time in the iterative scheme (182.3s) is more compared to the non-iterative scheme (139s) because of the stage-wise transformation of the password.
\item \textbf{Recall and Login attempts.} In the iterative scheme, 31 users (93.93\%) recalled their passwords successfully while in non-iterative scheme only 17 users (51.51\%) succeeded in their password recall. Moreover, the average login attempts for 31 successful users in the iterative scheme is just 1.54 while the average login attempts for 17 successful users in the non-iterative scheme is 1.89. This indicates that the users in the iterative scheme can recall their passwords more reliably without requiring any hint.
\item \textbf{User sentiment.} Only 4 users (12.12\%) found it difficult to create the password using the iterative scheme. On the other hand, 14 users (42.42\%) found it difficult to create the password using the non-iterative scheme. Thus, passwords in the iterative scheme are easy to create as well as easy to remember. Moreover, 28 users (84.84\%) in the iterative scheme and 25 (75.75\%) users in the non-iterative scheme felt that the resulting passwords are secure. Responses are listed in~\cite{trddc}.
\end{enumerate}

All 33 participants in our study were working professionals and may have better memory than average, which could positively influence the usability results. We only had 33 participants and with a larger population we might be able to observe further patterns. However, the purpose of this study was to observe whether the iterative or non-iterative scheme is usable for creating passwords from system assigned bins. As the usability data indicates, the iterative scheme fared better than the non-iterative in many aspects.
\section{Related Work}
\textbf{User behavior and Attacks.} A great deal of research effort has been spent in describing the password creation strategies of users and finding the best techniques to break these passwords in an offline mode. In 1978, Morris and Thompson~\cite{Morris} analysed 3289 user passwords and found that 86\% of these are weak either due to their prevalence in the dictionary or due to their short length. Further these passwords contained only lowercase letters or digits. Later in 1990, Klein~\cite{Klein} successfully broke 25\% of the passwords on the Unix system using the brute-force attack. In 1999, Moshe and William~\cite{Moshe} based upon their survey of 997 participants found that 80\% of the passwords were derived using only lowercase and uppercase letters. Then in 2005, Narayanan and Shmatikov~\cite{Narayanan} demonstrated the effectiveness of Markov models by generating the candidate passwords in decreasing order of the probability and cracking 67.6\% of the passwords. In 2007, Flor\^{e}ncio and Herley~\cite{Florencio} studied the passwords of nearly $5 \ million$ users and found that most of these passwords are composed using either lowercase letters or digits. In 2009, Weir {\em et al.}~\cite{Weir} proposed generating most probable guesses by learning probabilistic context-free grammar (PCFG) from the breached databases. In our work, we proved that the density-based guesses are more effective than the probability-based guesses. We demonstrated that the bin attacker gains maximum benefit if the bins are explored in decreasing order of density rather than in decreasing order of probability.

Mostly the password research is conducted by analysing the passwords in the publicly available breached databases~\cite{Weir,Shay3,Chinese} or by analysing the passwords collected using the surveys~\cite{Shay1,Shay2,Shay2014}. Bonneau however, for the first time performed the large scale study of anonymized $70 \ million$ passwords of Yahoo database in~\cite{BonneauMetric}. Bonneau also derived a useful measure for gauging the strength of a password database. However, computing this measure requires knowledge about the probability distribution of passwords. We relaxed this constraint and proposed a more general attack model which exploits the available information to divide the search space into partitions and explores them in decreasing order of density.

The presence of composition rules is believed to create secure passwords~\cite{Summers} while~\cite{Proctor,Shay2,Shay2014} suggest that longer passwords also provide equivalent security. However, all these results are based upon the survey of not more than a few thousand users and there is no real data available to study the passwords created in the presence of composition rules. We argue that the presence of composition policies can also be exploited by the attacker if the partition densities remain non-uniform.
\\
\textbf{Countermeasures.} To counter offline attacks, a significant amount of effort has been, and is being, invested to help users in remembering a high entropy secret. Various schemes have been proposed to influence the password creation strategies of users which includes the use of mnemonic-based passwords~\cite{Yan,Jeya}, system modified passwords~\cite{Sonia}, system assigned passwords~\cite{ShayPhrase}, long passwords~\cite{Shay2014} and pronounceable passwords~\cite{lau}. A recent study~\cite{BonneauSecret} shows that users can remember a randomly assigned 56-bit secret using the technique of spaced repetition while another recent study~\cite{Blum} helps user in generating a user friendly password which is difficult for machine to crack. {\em These studies emphasize the need and importance of high entropy passwords specifically to counter the offline attacker.} 

Motivated by these research, we proposed {\em iterative} scheme to help users in recalling their password from the system assigned bin. We purposefully adapted the system modified password scheme proposed by Forget {\em et al.}~\cite{Sonia} and the spaced-repetition technique proposed by Bonneau {\em et al.}~\cite{BonneauSecret} to make our iterative scheme more usable. Moreover, our adapted iterative scheme addresses usability concerns associated with the Forget {\em et al.} scheme~\cite{Sonia}  and security concerns of Bonneau {\em et al.} scheme~\cite{BonneauSecret}. 
\begin{itemize}
\item In 2008, Forget {\em et al.}~\cite{Sonia} showed that users accept few system suggested modifications to their password. The authors tested 4 conditions, namely, {\em Insert-2}, {\em Insert-3}, {\em Insert-4} and {\em Replace-2} with 16 participants in each condition. Also, all modifications were suggested to the user at once (non-iterative). Consequently, Insert-4 scheme which proposes 4 insertions to the user-chosen password were found to be unusable, even though the recall phase was conducted immediately after answering two questions and performing a distraction task of 45 sec. 
In our work, we tested {\em Replace-4} condition in both iterative and non-iterative fashion with 33 participants and showed that users are more likely to accept system suggested replacements when the approach is incremental. Further, we conducted the recall-phase after a delay of 72 hours from the password creation phase. Also, note that {\em Replace-4}  condition is new and has never been tested before.
\item The spaced repetition approach proposed by Bonneau {\em et al.}~\cite{BonneauSecret} requires the system-assigned password to be stored either in plaintext or in encrypted form until the user memorizes it. But in our iterative scheme, the password is stored securely using a hash function at every stage. The system only has to store the assigned bin in encrypted form until the password is created as per this assigned bin. Once the final password is set, this bin information is also removed from the system. If the server is compromised, the attacker has access to both the encryption key and the database. In this scenario, the spaced repetition scheme of Bonneau {\em et al.}~\cite{BonneauSecret} reveals the entire password while our scheme reveals only the bin information.
\end{itemize}

\section{Discussion}

\subsection{Hybrid Bin Attacker}
The main advantage of the bin attacker is that it does not require any dictionary to operate. The attacker takes advantage of the fact that users create passwords predominantly using lowercase letters or digits and if there are any uppercase letters and symbols in the passwords then they are at the predictable positions. Recently, researchers showed that human-generated passwords follow Zipf's law~\cite{Zipf}. The observation is that few passwords are very frequent and constitute a good fraction of the password database. For instance, the analysis of Rockyou database~\cite{Rockyou1} reveals that 5000 most popular passwords were used by almost 20\% of the users ($6.4 \ million$ users) while recently released password frequency list of a live Yahoo database~\cite{Yahoo} suggests that 5000 most popular passwords are used by almost 10\% of the users ($7 \ million$ users) . But there is also a long list of infrequent passwords ($count\le5$) which constitutes a major fraction of the password database. For instance, 53\% ($16.98\ million$) of the passwords in Rockyou database and 56\% ($38.59 \ million$) of the passwords in Yahoo database have a frequency count of  less than five. These infrequent passwords constitute the heavy tail of the distribution and it is difficult to estimate their probabilities.

Based upon these observations, we provide another instance of a partition attacker which we refer to as {\em hybrid bin attacker}. This attacker creates unit-sized partitions for every popular password found in the breached databases. The density of the unit-sized partition containing a single password is equal to the password frequency count. For illustration purpose, assume that the hybrid bin attacker uses Rockyou dataset for the training purpose. The most popular password in the Rockyou database is ``123456" with 290,729 occurrences. The attacker constructs a unit-sized partition for this password and sets its density to 290,729. Table~\ref{tab:popular} shows that the password ``123456" is the popular choice in the remaining databases as well which demonstrates the advantage of creating unit-sized partitions with popular passwords.

\begin{table}[h]
\centering
\tiny
\caption{Top 5 most popular passwords of 6 datasets.}~\label{tab:popular}
\begin{tabular}{|c|c|c|c|c|c|c|c|} \hline
Rank & Rockyou & Gmail & Mail.ru & Yandex & Yahoo & Gawker & Phpbb\\ \hline
1 & 123456       &	123456	& qwerty       &	123456	& 123456	&123456	& 123456 \\
2 & 12345         &	password	& 123456      &	123456789	& password	&password	& password \\
3 & 123456789 &	123456789	& qwertyuiop &	111111	& welcome	&12345678	& phpbb \\
4 & password    &    12345		& qwe123      &	qwerty	& ninja	&lifehack	& qwerty \\
5 & iloveyou	    &     qwerty	& qweqwe	&	1234567890	& abc123	&qwerty	& 12345 \\
\hline
Top 5(\%) & 1.70\% &     1.71\%	& 5.56\%	&	6.07\%	& 0.78\%	& 0.68\%	& 2.18\% \\
\hline
\end{tabular}
\end{table}

After creating unit-sized partitions, the hybrid bin attacker creates bin partitions and estimates their densities from the available password data. By $Theorem\ 1$, the attacker then sorts the resulting partitions in decreasing order of density and attacks the target database. In this way, the hybrid bin attacker can first target the popular passwords by creating a unit-sized partitions and then target the infrequent passwords by creating bin partitions. By using top 5000 Rockyou passwords which constitutes nearly 20\% ($6.4 \ million$) user accounts of the Rockyou database, the attacker can recover a substantial fraction of passwords in other databases (Table~\ref{tab:top}). A simple way to counter this hybrid bin attacker is to employ the counting scheme as proposed by Schechter {\em et al.}~\cite{Schechter} which bans the password after it reaches a predefined frequency threshold. With this countermeasure in place, there are no popular passwords to exploit and the efficiency of the hybrid bin attacker is reduced to that of the bin attacker. However, this countermeasure requires abandonment of password-specific salts, as it is not possible to count the occurrences of a password if every password has a unique salt. 

\begin{table}[h]
\centering
\tiny
\caption{Percentage of passwords cracked in different databases by using top 5000 frequent password of the Rockyou database.}~\label{tab:top}
\begin{tabular}{|c|c|c|c|c|c|} \hline
Gmail & Mail.ru & Yandex & Yahoo & Gawker & Phpbb\\ \hline
530,387(10.76\%) & 417,669(8.95\%) & 175,354(13.90\%) & 47,495(10.72\%) & 75,871(7.00\%) & 36,713(14.37\%)\\ 
\hline
\end{tabular}
\end{table}

\subsection{Salting and Slow Hashes}

It is possible to slow down the offline attacker by using password-specific salts and employing iterative algorithms such as PBKDF2~\cite{Kaliski:2000:PPC:RFC2898}. Now, we discuss the effect of such techniques on the efficiency of both the bin attacker and the hybrid bin attacker.
\\
\textbf{Salting.} The use of password-specific long random salts serves two purposes~\cite{guide}, makes it infeasible to use a rainbow table~\cite{oechslin} and makes it more time-consuming to crack a large list of passwords.
\begin{enumerate}
\item The use of salts prevents pre-computation attacks such as rainbow tables. If passwords are just hashed then the attacker can pre-compute hashes for commonly used millions of passwords well-in advance before acquiring the password database. After stealing the password database, the password hashes can be reversed immediately into plaintext by performing lookups in the rainbow table. Performing lookup is considerably faster than computing the hash function which speeds up the cracking process substantially.  On the other hand, if the password database is salted, then the rainbow table would have to contain ``$guess||salt$" pre-hashed. The value of salt is not known to the attacker until the password database is stolen. In this case, the attacker can pre-compute rainbow table for every value of the salt. However, every bit of salt doubles the storage requirement and if the salt is sufficiently long (128 bits) then this pre-computation is infeasible. As the goal of the salt is to prevent pre-generated databases from being created, it is stored in plaintext. Hence, once the salted password database is compromised, the attacker can learn the salt and start the password cracking process. 
{\em The use of long random salts forces the attacker to crack the hashes after acquiring the password database, instead of being able to just look them all up in a rainbow table. This is how our attacker is modelled, we do not assume that a partition attacker pre-computes the hashes in the rainbow table.} 

\item The use of password-specific salts also conceals the frequency distribution of passwords. If multiple users have same passwords then applying the hash function results in same hashes. The offline attacker can simply count the frequency of each password hash in the database and then use the computing power to crack only unique hashes. If each password has a unique salt then hashes will be different even for the same password. Because of the password-specific salt the attacker cannot simply use the counting technique to find out if two passwords in the database are same or not. The attacker has to compare a guess against each password entry. If there are $\phi$ entries in the salted password database, verifying a single guess in this scenario requires $O(\phi)$ comparisons. Thus, the use of long random salts slows down the attacker by a factor of $\phi$, where $\phi$ is the number of users in the system. Previously, if an attacker could verify $2^G$ guesses against the password database, after using the long random salts, the number of guesses is reduced to $2^G/2^{log(\phi)}=2^{G-log(\phi)}$. 

In section 3, we showed that the bin attacker with the computing power of $2^{56}$ can break more than 90\% of passwords in 6 different password databases (Figure 6a). As most of the test datasets have over a million entries $\phi \approx 2^{20}$ and assuming that these passwords were salted with long random hashes, the bin attacker with the computing power of $2^{56}$ can now verify $2^{G-log(\phi)} = 2^{56-20} = 2^{36}$ guesses. As shown in Figure 6a, the bin attacker with $2^{36}$ guesses can still compromise nearly 50\% of the passwords in the most target password databases which is a significant proportion. {\em Thus, with the use of random salts, the number of guesses that the bin attacker can generate have reduced, but the bin attacker can still compromise significant portion of the password database.} 

\item Further, the use of password-specific salts refrains the use of count-min sketch-based countermeasure proposed by Schechter {\em et al}~\cite{Schechter}. As a result, few passwords become very popular which results in Zipf's law. As explained earlier, the hybrid bin attacker can compromise a substantial portion of the password database using just a few thousand popular guesses. {\em Although, the use of password-specific salts conceals the frequency distribution of passwords and slows down the offline attacker, it does not prevent the Zipf's law in human-generated passwords which can be exploited by the hybrid bin attacker.}  

\item The purpose of salting is only to slow down the offline attacker who possesses the password database. It does not protect against the online attacker who mounts the online guessing attacks on a remote website. The online attacker exploits the fact that human-generated passwords are highly biased and as a result some passwords are extremely popular. For instance, the count of most popular password in Yahoo database~\cite{Yahoo} is 753,217 (1\%). {\em Thus, online attacker can compromise nearly 1\% of the total accounts by merely trying a single guess on a website and use of salt in this attack does not matter.}

\item Further, salting does not slow down the offline attacker who is only after one or few accounts. It helps only if the attacker wants to break many passwords from the password database containing large number of entries. The attacker who wants to compromise only one important account can generate all $2^{56}$ guesses as for this particular attack $\phi=1$. For instance, if the attacker is after, say, admin's account or the account of an influential person such as a celebrity then the attacker can learn the value of the salt and compute $hash(guess||salt)$ for all $2^{56}$ guesses. {\em Thus, the use of a salt does not slow down the targeted attack. The impact of salt is high only if  the number of users $\phi$ is large and the objective of the offline attacker is to break as many passwords as possible.}

\item Recently, researchers~\cite{cryptoeprint:2016:153} emphasized the importance of releasing password frequency list of a website for deciding the various policy parameters. To enable this, they presented a differential privacy based mechanism for releasing the perturbed password frequency lists with rigorous security, efficiency, and distortion guarantees. Now, the knowledge of password frequency distribution can help organizations in setting various policy parameters, for instance the number of unsuccessful attempts $k$ before the account is locked. A smaller value of $k$ can decrease the usability of the authentication experience, while selecting a larger value of $k$ can reduce security. The empirical data from password frequency list could help organizations to make a more informed decision when considering the trade-off between security and usability. {\em Since salting conceals the frequency distribution of passwords, this proposed mechanism~\cite{cryptoeprint:2016:153} requires abandonment of password-specific salts.}
\end{enumerate}
\textbf{Slow Hashes.} Conventional hash functions such as MD5, SHA1, SHA2 can be computed quickly by employing custom made GPU clusters. To make the task of guessing a single password harder (costlier), one can use iterative algorithms such as PBKDF2~\cite{Kaliski:2000:PPC:RFC2898}. By choosing appropriate number of iterations, PBKDF2 can limit the number of guesses to 40,000 guesses per second~\cite{slow}. If passwords are hashed with unique salts, it further slows down the offline attacker. However, if Zipf's law is prevalent in the password data then the hybrid bin attacker requires just a few thousand guesses to recover a substantial fraction of the target database (Table~\ref{tab:top}). On the other hand, if passwords are not salted (to prevent the Zipf's law phenomena) then the bin attacker can compute $2^{36.59}$ guesses using a machine capable of generating 40,000 guesses per second for a month. With $2^{36.59}$ guesses, the attacker can still compromise nearly 50\% passwords in the target password database of size $\phi=2^{20}$ (Figure 6a) which is a significant proportion. {\em The use of PBKDF2 has certainly slowed down the bin attacker but it did not make the attacker completely ineffective.}

\subsection{Long Passwords}
Recent research~\cite{Shay2,Shay2014} suggests that longer passwords are more usable and secure. However, if longer passwords are composed mainly of shorter popular passwords, the password strength will be reduced. For instance if the longer password has ``123456" or ``password" as a substring then its strength is reduced. We used the top 1000 lowercase passwords and analysed  longer lowercase $L^{l}$ passwords, where $l >=12$, in every breached database. The results in Table~\ref{tab:long} clearly indicate the presence of popular substrings in the longer passwords which can be exploited by the offline attacker.
\begin{table}[h]
\centering
\scriptsize
\caption{Percentage of $L^l$ passwords($l >=12$) containing popular substrings.}~\label{tab:long}
\begin{tabular}{|c|c|c|} \hline
Database &  Longer passwords  &  Percentage\\ \hline
Rockyou	      &690,823		&44.75\\
Gmail	      &73,551		&34.05\\
Mail.ru    &74,516		&35.21\\
Yandex   &16,967		&27.65\\
Yahoo   	      &6,546		&34.03\\
Phpbb	                &2,340		&44.48\\
\hline\end{tabular}
\end{table}
\\
\subsection{Bins Utilization}
Dividing the search space into {\em bins} enables us to gauge the utilized search space and therefore determine the effort required for the bin attacker to break the password database in the event of a breach. If the densities of password bins are highly skewed then even an attacker with a small computing power can recover a major fraction of the password database. Most websites attempt to counter offline attacks by forcing users to create passwords from a larger search space, accepting only those passwords derived using a large alphabet set and satisfying minimum length requirement. However, users can respond to this requirement by choosing passwords from bins that requires least effort to recall which again results in few denser bins and can be easily exploited by the bin attacker. Figure~\ref{fig:binutil} depicts the huge gap between the available number of bins and actually utilized bins as a function of password length $l$. The number of available bin increases exponentially with the password length $l$, however the number of utilized bins increases only linearly. As a result the search space is underutilized which can be exploited by the bin attacker.

\begin{figure}[h]
\centering
\includegraphics[scale=0.7]{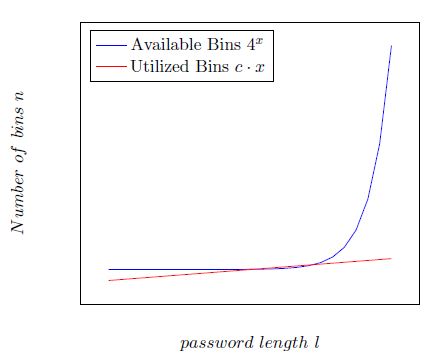}
\caption{Number of available bins vs number of utilized bins as a function of password length $l$.}~\label{fig:binutil}
\end{figure}

Enforcing the same composition policy and length requirement on all users can again lead to the small number of denser bins. Therefore, the system should guide users in the password creation process and we believe that using {\em bin explorer} schemes is a step ahead in that direction. Earlier (in Table~\ref{tab:dataset}) we saw that the number of bins used in the Rockyou database was 140,401. Thus, the capability of users in remembering passwords from different bins should not be undermined. If $32\ million$ Rockyou users would have used even $100,000 < 140,401$ bins uniformly, the attacker would have required huge effort. {\em The uniform density makes the task of offline attacker exponentially difficult due to the large number of bins as well their huge average size, both exponential in terms of password length.}
\section{Conclusion}
The existing probabilistic measures such as entropy and partial guessing metric which are used to measure the offline guessing resistance of password databases are based upon stronger assumptions. These measures require the knowledge of password probabilities which are not always easy to learn. For instance, it is difficult to estimate the probabilities of infrequent passwords in a heavy-tailed distribution. The {\em partition attack model} proposed in this paper does not assume the knowledge of password probabilities. The {\em partition attacker} uses information from previous breaches and, divides the search space into partitions and learns the partition densities. The partitioning approach enables us to measure the utilized search space more effectively thereby providing a more accurate estimate of the password guessing effort in the event of a database breach. 

The {\em partition attack model} is more general as different partitioning strategies represents different instances of a partition attacker. On one extreme, there is an all-powerful attacker, equipped with the knowledge of password probabilities, who divides the search space into unit sized partitions and explores them in decreasing order of density. On the other extreme there is a brute-force attacker who has zero knowledge and treats the entire search space as a single partition. The bin attacker, pre-terminal density attacker and hybrid bin attacker lie somewhere in the middle of attack spectrum since these attacks are more efficient than brute-force attacks but less efficient than probabilistic attacks.

The bin densities in the real-world password databases are highly skewed and even a weak attacker can cause sufficient damage to the system. The pre-terminal partitions are more granular than bins and therefore, the pre-terminal density attacker is more efficient than the bin attacker. The hybrid bin attacker on the other hand create more granular partitions than pre-terminal density attacker and is therefore more effective than pre-terminal partitioning.

The {\em partition attacker} can be thwarted by ensuring that all partitions are equally dense. We proposed a {\em bin explorer system} which first determines the minimum password length by considering the number of users in the system and then uses the random bin distribution strategy to achieve the uniform bin densities, thereby making the task of the bin attacker exponentially difficult. Since the passwords are created using the system assigned bins, the resulting password distribution does not obey Zipf's law and consequently the hybrid bin attacker and pre-terminal density attacker are also countered. 

The usability study results suggest that passwords created using the system assigned bins are not difficult to remember. We found that users can benefit by storing the assigned bins and employing them as hint later during the password recall. Further, users also preferred random bins for protecting their critical accounts. 
\\
\textbf{Future Work.} In this paper, we demonstrated that the partition attacker gets a huge benefit by exploring denser partitions. However, this greedy attacker is unobservant since it does not learn any information while exploring the partitions. It would be interesting to model a dynamic attacker that learns information during the attack and employs it to target the remaining passwords. Also, encouraged by the usability results of the iterative scheme, we intend to explore more schemes to help users in choosing passwords from the system assigned bins.

\newcommand{\BIBdecl}{\setlength{\itemsep}{0 em}}
\bibliographystyle{IEEEtranS}
\bibliography{COSE_ref}

\end{document}